\newtheorem{theorem}{Theorem}
\newtheorem{lemma}[theorem]{Lemma}
\newtheorem*{remark}{Remark}
\pgfmathsetmacro\sprayRadius{.75pt}
\pgfmathsetmacro\sprayPeriod{.8cm}
\newcommand\copyrighttext{
  \footnotesize \textcopyright 2023 IEEE. Personal use of this material is permitted.
  Permission from IEEE must be obtained for all other uses in any current or future
  media, including reprinting/republishing this material for advertising or promotional
  purposes, creating new collective works, for resale or redistribution to servers or
  lists, or reuse of any copyrighted component of this work in other works.
  DOI: \href{<http://tex.stackexchange.com>}{10.1109/TNNLS.2023.3307470}}
\newcommand\copyrightnotice{%
\begin{tikzpicture}[remember picture,overlay]
\node[anchor=south,yshift=3pt] at (current page.south) {\fbox{\parbox{\dimexpr\textwidth-\fboxsep-\fboxrule\relax}{\copyrighttext}}};
\end{tikzpicture}%
}
\title{
Graph Neural Networks on SPD Manifolds \\
for Motor Imagery Classification:\\
A Perspective from the Time-Frequency Analysis
}
\author{
Ce~Ju and Cuntai~Guan~\IEEEmembership{Fellow, IEEE}
\thanks{Ce Ju and Cuntai Guan are with the S-Lab and School of Computer Science and Engineering, Nanyang Technological University, 50 Nanyang Avenue, Singapore (emails: \{juce0001,ctguan\}@ntu.edu.sg). 
}
}
\begin{document}

\maketitle
\thispagestyle{empty}
\pagestyle{empty}

\copyrightnotice

\begin{abstract}
The motor imagery (MI) classification has been a prominent research topic in brain-computer interfaces based on electroencephalography (EEG). Over the past few decades, the performance of MI-EEG classifiers has seen gradual enhancement. In this study, we amplify the geometric deep learning-based MI-EEG classifiers from the perspective of time-frequency analysis, introducing a new architecture called Graph-CSPNet. We refer to this category of classifiers as \emph{Geometric Classifiers}, highlighting their foundation in differential geometry stemming from EEG spatial covariance matrices. Graph-CSPNet utilizes novel manifold-valued graph convolutional techniques to capture the EEG features in the time-frequency domain, offering heightened flexibility in signal segmentation for capturing localized fluctuations. To evaluate the effectiveness of Graph-CSPNet, we employ five commonly-used publicly available MI-EEG datasets, achieving near-optimal classification accuracies in nine out of eleven scenarios. The Python repository can be found at \url{github.com/GeometricBCI/Tensor-CSPNet-and-Graph-CSPNet}.
\end{abstract}

\begin{IEEEkeywords}
Motor Imagery Classification, Graph Neural Networks, Symmetric Positive Definite Manifolds, Geometric Deep Learning, Spectral Clustering.
\end{IEEEkeywords}

\IEEEpeerreviewmaketitle

\section{Introduction}
A brain-computer interface (BCI) is a technology that measures and analyzes the relevant information of a user's brain activity and establishes a communication link between the brain and the external environment~\cite{wolpaw2000brain}.
Electroencephalogram (EEG)-based BCIs are among the most widely used, portable, and cost-effective BCIs, and have led to numerous applications, such as post-stroke motor rehabilitation, control of a wheelchair system, and video gaming~\cite{yuan2014brain}.
The control of these EEG-based BCI applications is accompanied by EEG rhythmic changes over the sensorimotor cortices, including the posterior frontal and anterior parietal regions, and a wealth of studies and evidence over the past thirty years has demonstrated that the sensorimotor rhythm changes associated with motor imagery (MI) are effective control signals for BCIs~\cite{pfurtscheller1999event,pfurtscheller2001motor,pfurtscheller201213}.

\begin{table*}[!t]
\caption{Comparison between Tensor-CSPNet and Graph-CSPNet.~\label{tab:tensor-graph}
}
\centering 
\begin{tabular}{l  l  l  }
\toprule
Geometric Classifiers  & Tensor-CSPNet & Graph-CSPNet (Proposed) \\
 \midrule
Network Input & Tensorized Spatial Covariance Matrices. & Time-Frequency Graph.\\
Architecture &BiMaps; CNNs. & Graph-BiMaps.\\
Distinctive Structure &CNNs for Temporal Dynamics.& Spectral Clustering for Time-Frequency Distributions.\\
Training Optimizer & Riemannian Optimization & Riemannian Optimization.\\
 \midrule
Underlying Space & ($\mathcal{S}_{++}, g^{AIRM}$).  & ($\mathcal{S}_{++}, g^{AIRM}$). \\
Methodology Heritage & CSP. & CSP; Riemannian-Based Approaches.\\
Design Principle &The Time-Space-Frequency Principle: &The Time-Space-Frequency Principle and the Principle of Time-Frequency\\
{} &\emph{Exploitation in the frequency, space, and time } & Analysis: \emph{Exploitation in the time-frequency domain simultaneously, and} \\
{} &\emph{domains sequentially.} & {then in the space domain.} \\
\bottomrule
\end{tabular}
\end{table*}

During the planning and execution of movement, the sensorimotor rhythms exhibit changes in amplitude that are referred to as event-related desynchronization (ERD) and event-related synchronization (ERS), corresponding to decreases and increases in rhythmic activity, respectively. 
These changes generate patterns that can be reliably and distinctly discerned, enabling the EEG signals to be accurately classified~\cite{pfurtscheller1997eeg}. 
In an EEG-based motor imagery (MI) task, the individual mentally simulates physical movement, which activates the cortical sensorimotor systems~\cite{jeannerod1994representing} and primary sensorimotor areas~\cite{beisteiner1995mental}, and an EEG device records EEG signals. 
Various machine learning classifiers are then utilized to decode an individual’s intentions~\cite{lotte2018review}.

Building on the ERD/ERS effect observed during a MI task, conventional MI-EEG classifiers adopt the \emph{time-space-frequency principle} to extract EEG signal patterns and rhythms. 
The essence of this principle lies in the analysis of EEG signals in terms of frequency, time, and space, allowing for the identification of patterns across these dimensions based on the dominant frequencies of rhythmic activity, their periods of occurrence, and their distribution over the sensorimotor cortices.~\cite{qin2005classification,ferree2009space} 
This principle has been generalized to capture patterns across frequency, time, and space in EEGs.
However, the non-stationary nature of EEG spectral contents can hinder the effectiveness of conventional methods that rely on statistical stationarity assumptions~\cite{allen2010time,sanei2013eeg}. 
In these circumstances, traditional Fourier analysis proves to be ineffective.
To address this challenge, the time-frequency analysis is employed to localize rhythmic components in real time, strengthening the MI-EEG classifiers~\cite{blanco1995time,roach2008event,tzallas2009epileptic,morales2022time}. 
Examples of such methods include the short-term frequency transform and the wavelet transform~\cite{adeli2003analysis,subasi2007eeg,adeli2007wavelet}, which have been integrated into the CSP approach to creating novel wavelet-based classifiers such as the wavelet-CSP classifier~\cite{robinson2013eeg}.

Motivated by the principles of Gabor's time-frequency theory and Morlet's wavelet theory~\cite{gabor1946theory,morlet1982wave_a, morlet1982wave_b}, this study aims to enhance a geometric deep learning-based MI-EEG classifier, namely Tensor-CSPNet~\cite{ju2022tensor}, to facilitate a more effective exploration of local oscillatory components in EEGs.
The proposed approach employs graph-based neural networks, Graph-CSPNet, to handle the MI-EEG classification using EEG spatial covariance matrices.
A notable characteristic of this neural network incorporates a novel graph BiMap layer, which replaces fixed-length segmentation with a flexible time-frequency resolution to capture localized fluctuations, enabling simultaneous analysis in the time-frequency domain.
Drawing inspiration from the Heisenberg-Gabor uncertainty principle, the graph BiMap layer strikes a balance by capturing signal information with high spectral resolution and low temporal resolution for low frequencies while adopting the opposite for high frequencies.
Specifically, to create a time-frequency graph within the graph BiMap layer, each spatial covariance matrix obtained from an EEG segment serves as a graph node.
The graph captures the local topology using a novel \emph{time-evolution} method.
The edge weight determines the similarity between neighboring nodes, computed using a Gaussian kernel applied to the Riemannian distance between two nodes on SPD manifolds.
Consequently, the classification problem of MI-EEG is transformed into a graph classification problem involving a time-frequency distribution.
Table~\ref{tab:tensor-graph} compares two geometric classifiers for MI-EEG classification.

Due to the limited number of trials available for training, the subject-specific scenario poses a big challenge for MI-EEG classifiers. Hence, the proposed geometric classifier will be evaluated on five publicly available MI-EEG datasets with two classic subject-specific scenarios: the 10-fold cross-validation and the holdout validation.
The rest of the paper is structured as follows: 
Section~\ref{preliminary} delves into the study history of MI-EEG classifiers and introduces SPD manifolds, graph convolutional networks, and existing neural networks on SPD manifolds. 
Section~\ref{methodology} outlines the architecture of Graph-CSPNet. 
The performance of Graph-CSPNet is then compared in a comprehensive set of experiments in Section~\ref{experiments}, and several key issues are discussed in detail in Section~\ref{sec:discussion}.

\section{Preliminary}\label{preliminary}

In the early stages of MI-EEG classification, feature extraction, and selection were commonly used. This involved hand-crafted features based on time-domain and frequency-domain statistics, such as band power, magnitude-squared coherence, and phase-locking value~\cite{brodu2011comparative,krusienski2012value,lachaux1999measuring}. Traditional machine learning approaches, including support vector machines and linear discriminant analysis, were then applied for classification.

The dominant perspective in MI-EEG classification has been the analysis in the time domain. Most methods utilize second-order statistics of EEG signals, specifically the spatial covariance matrix. The common spatial pattern (CSP) methods are widely used spatial filtering techniques. These methods aim to obtain optimal spatial features by maximizing the variance of one class while minimizing the variance of the other~\cite{muller1999designing,ramoser2000optimal,tomioka2006logistic,blankertz2007optimizing}.

This results in the following time-domain paradigm for EEG signal analysis.
Let $X \in \mathbb{R}^{n_C \times n_T}$ be a segment of EEG signals, where $n_C$ is the number of EEG channels, and $n_T$ is the number of sampled points on an epoch duration, and segment $X$ is assumed band-pass filtered, centered, and scaled. 
The linear classifier that predicts the label for segment $X$ can be written as follows,
\begin{align*}
f\big(X; \{w_i, \beta_i\}_{i=1}^N\big) = \sum_{i=1}^N \beta_i \log{(w_i X X^\top w_i^\top)} + \beta_0,
\end{align*} 
where $N$ is the number of spatial filters, $\{w_i\}_{i=1}^N \in \mathbb{R}^{n_C}$ are spatial filters and $\{\beta_i\}_{i=1}^N \in \mathbb{R}$ are biases. 
Significant attempts have been made to identify more versatile and practical spatial filters for BCI scenarios~\cite{lotte2018review}.

Spatial covariance matrices possess a substantial amount of discriminative information. This includes the variance captured in the on-diagonal entries and the coherence between adjacent channels recorded in the off-diagonal entries~\cite{yger2013review}. This information is closely related to rhythmic oscillations, which can be quantified through frequency analysis techniques like power spectral density and magnitude-squared coherence to track the ERD/ERS effect in motor imagery tasks~\cite{pfurtscheller201213}. By focusing solely on the sensorimotor cortices and utilizing spectral power to assess zero-phase synchrony, classification outcomes are comparable, if not superior, to those achieved using combinations of other spectral features, such as coherence and phase-locking value. This implies that the information embedded within spatial covariance matrices is adequate for MI-EEG classification~\cite{krusienski2012value}.

In 2011, Barachant first utilized SPD manifolds for characterizing EEG spatial covariance matrices in a Riemannian-based classification framework~\cite{barachant2011multiclass}.
The discipline of SPD manifolds is ingrained with the affine invariant Riemannian metric $g^{AIRM}$. 
This approach referred to as the Riemannian-based approach, has since garnered increasing interest in the BCI community, with subsequent studies focusing on optimizing its framework~\cite{barachant2011multiclass,yger2013review,congedo2017riemannian} and applications to various BCI tasks~\cite{barachant2011channel,sabbagh2019manifold,larzabal2021riemannian}. 
The success of Riemannian-based approaches can be attributed to their ability to capture the discriminative power between task classes by utilizing Riemannian distances.
The approach theoretically states in~\cite{barachant2010common} that the Riemannian distance between class-related covariance matrices $S^{+}$ and $S^{-}$ can be expressed as follows:
\begin{align}\label{formula_riemannian-based}
d_{g^{AIRM}} (S^{+}, S^{-}) = \sqrt{\sum_{i=1}^{n_C} \log^2 (\frac{\lambda_i}{1-\lambda_i}}),
\end{align}
where each $\lambda_i \in (0,1)$ is the eigenvalue of Equation~\ref{Eq_CSP} in Section~\ref{sec:CSP}.
This implies that a more significant Riemannian distance between two EEG spatial covariance matrices indicates a greater discriminative power, as their eigenvalues are more likely to be close to 1.

\begin{table*}[!ht]
\centering 
  \begin{threeparttable}[b]
\caption{MI-EEG classifiers using EEG spatial covariance matrices.~\label{tab:treatments}}
\begin{tabular}{ l | l | l}
\toprule
Category of Approach & Initiation Time & Principle\\ 
\midrule
Common Spatial Patterns (\cite{muller1999designing,ramoser2000optimal,tomioka2006logistic,blankertz2007optimizing}, etc.). & 1999~\cite{muller1999designing} &Search for a linear projection direction that maximizes the discriminative\\
&& information between classes through solving Eigenvalue Problem~\ref{Eq_CSP}.\\ 
Riemannian-Based Approaches (\cite{barachant2011multiclass,yger2013review,congedo2017riemannian}, etc.). & 2011~\cite{barachant2011multiclass} & Compare the quantity (i.e., Riemannian distance) consisting of the entire spectrum\\
&&information of Eigenvalue Problem~\ref{Eq_CSP}, instead of directly solving it.\\
Geometric Classifiers (\cite{ju2022tensor,ju2022deep,ju2023score,kobler2022spd,pan2022matt}, etc.). & 2022\tnote{*}~\cite{ju2022tensor} & Search for a nonlinear projection direction that maximizes the discriminative\\
&& information between classes using gradient descent.\\
\bottomrule
\end{tabular}
  \begin{tablenotes}
    \item[*] {\footnotesize Please note that while the initial application of second-order neural networks for processing EEG spatial covariance matrices was proposed in 2020~\cite{ju2020federated}, the systematic introduction of geometric classifiers in BCIs began with the publication of Tensor-CSPNet.}
  \end{tablenotes}
   \end{threeparttable}
\end{table*}

The exploration of SPD manifolds traces its roots back to the 1970s, springing from the classical task of enhancing geodetic networks~\cite{grafarend1974optimization}. 
Since then, it has been assimilated into a multitude of disciplines, including but not limited to geometric statistics, signal processing, computer vision, and robotics~\cite{pennec2008statistical,pennec2020manifold, jaquier2017gaussian,calinon2020gaussians,arnaudon2013riemannian,said2017riemannian,tuzel2006region,tabia2014covariance}. 
The conceptualization of spatial covariance matrices as points on SPD manifolds represents a significant advancement in the field, going beyond its purely formalistic nature. This abstraction acknowledges the positive definite constraints and infuses new vitality into research by incorporating technical tools from theoretical mathematics and physics. For instance, the abstract formulation of SPD manifolds provides a natural framework for interpolating diffusion tensors and introduces a novel measure of anisotropy in diffusion tensor magnetic resonance imaging~\cite{fletcher2007riemannian}. Moreover, applying optimal transport theory on SPD manifolds establishes a solid mathematical foundation for evaluating the variability between calibration-feedback phases in a BCI system~\cite{ju2022deep,bonet2023sliced}. 
Recently, there has been a gradual shift in the trend of MI-EEG classifiers towards the adoption of deep learning-based approaches that operate in the time domain, allowing for the automatic extraction of features from the original EEG signals.
In response to this trend, the first novel geometric deep learning-based approach on SPD manifolds, known as Tensor-CSPNet, has been proposed to leverage the \emph{time-space-frequency} principle in capturing discriminative information from EEG spatial covariance matrices. This approach has demonstrated performance levels close to optimality, comparable to those achieved by convolutional neural networks-based approaches~\cite{schirrmeister2017deep,sakhavi2018learning,lawhern2018eegnet,mane2021fbcnet}.
This approach is referred to as \emph{Geometric Classifiers} in this paper. 
The success of this approach is attributed to the BiMap layer introduced in Section~\ref{sec:NN_SPD}, which maintains the Riemannian distances consistently during the learning process, as indicated by the affine invariance property of $g^{AIRM}$ as follows, 
\begin{align}\label{formula_AIRM}
d_{g^{AIRM}}(S_1, S_2) = d_{g^{AIRM}}(WS_1W^\top, WS_2W^\top),
\end{align}
where $S_1, S_2 \in \mathcal{S}_{++}$ and weight matrix $W$ of the bi-map transformation belongs to the general linear group.
This preservation of Riemannian distances ensures the maintenance of discriminative power between task classes.
Numerous recent studies have showcased the consistency and potential of this pathway in the realm of MI-EEG classification tasks~\cite{ju2020federated,ju2023score,kobler2022spd,pan2022matt}.

In summary, Table~\ref{tab:treatments} presents the second-order statistics (EEG spatial covariance matrices) approaches for MI-EEG classifiers, with a more detailed discussion of time-domain statistics and frequency-domain statistics found in Section~\ref{TD_statistics}.
In the latter half of this section, we will briefly overview the computational techniques related to the proposed method.

\subsection{SPD Manifolds with Affine Invariant Riemannian Metric}
The spatial covariance matrice of an EEG segment $X\in \mathbb{R}^{n_C\times n_T}$ is denoted as $S = XX^\top \in \mathcal{S}_{++}$, where $\mathcal{S}_{++}:=\{S \in \mathbb{R}^{n_C \times n_C}: S = S^\top \text{ and } x^\top S x > 0, \forall x \in \mathbb{R}/\{0\} \}$ is the space of real $n_C\times n_C$ symmetric and positive definite matrices. 
Equipped with the affine invariant Riemannian metric $g^{AIRM}_P(v, w):= \langle \, P^{-\frac{1}{2}} v P^{-\frac{1}{2}}, P^{-\frac{1}{2}} w P^{-\frac{1}{2}} \, \rangle$ for each $v$ and $w$ on tangent space $\mathcal{T}_{P} \mathcal{S}_{++}$, the space of spatial covariance matrices becomes a Hadamard manifold, which refers to a Riemannian manifold that is complete and simply connected and has an everywhere non-positive sectional curvature. 
The Riemannian distance between two matrices $S_1$ and $S_2$ is given by $d_{g^{AIRM}} (S_1, S_2) = ||\log(S_i^{-\frac{1}{2}} S_j S_i^{-\frac{1}{2}})||_{\mathcal{F}}$, where $\mathcal{F}$ is the Frobenius norm.

\subsection{Common Spatial Pattern}\label{sec:CSP}
Let $S^{+}$ and $S^{-} \in \mathbb{R}^{n_C \times n_C}$ be the class-related estimates of spatial covariance matrices of segments in a two-class MI-EEG paradigm using the arithmetic mean, i.e., $S^+= \frac{1}{|\mathcal{I}_+|} \sum_{l \in \mathcal{I}_+ } X_l {X_l}^\top$ and $S^-= \frac{1}{|\mathcal{I}_-|} \sum_{l \in \mathcal{I}_- } X_l {X_l}^\top$, where $\mathcal{I}_+$ and $\mathcal{I}_-$ are sets of segments in two classes respectively. 
The CSP method is given by a simultaneous diagonalization of $S^{+}$ and $S^{-}$ as follows,
\[
W S^{+/-} W^\top = \Lambda^{+/-},
\]
where diagonal matrices $\Lambda^{+}$ and $\Lambda^{-} \in \mathbb{R}^{{n_C} \times {n_C}}$ hold an identity constraint $\Lambda^{+} + \Lambda^{-} = I_{n_C}$.
Each column vector $w \in \text{Column}(W)$ is called a spatial filter.
It is equivalent to solving a generalized eigenvalue problem as follows, 
\begin{align}\label{Eq_CSP}
(S^+ + S^-)^{-1} S^{+/-} w = \lambda^{+/-} w.
\end{align}
A set of discriminative powers consists of $m$ spatial filters $\{w_j\}_{j = 1}^m$ from both ends of the spectrum, i.e., $z_j = \log{(w_j XX^\top w_j^\top)}$, where $1 \leq m \leq {n_C}$.
%

\subsection{Graph Convolutional Networks}
Given an undirected graph $\mathcal{G}(\mathcal{V}, \mathcal{E})$ with $N$ nodes $\{v_i\}_{i=1}^N \in \mathcal{V}$ and edges $ \{(v_i, v_j)\}_{i\neq j} \in \mathcal{E}$.
Weighted adjacency matrix $A \in \mathbb{R}^{N \times N}$ of graph $\mathcal{G}$ records non-negative edge weights $\{a_{ij} \geq 0\}_{1\leq i,j\leq N}$.

The architecture of a multi-layer graph convolutional network model is given by the following layer-wise propagation rule:
\begin{align*}
H^{(l+1)} \gets \sigma \Big( (\bar{D}^{-\frac{1}{2}} \bar{A} \bar{D}^{-\frac{1}{2}}) H^{(l)} W^{(l)}\Big),
\end{align*}
where adjacency matrix with added self-connections $\bar{A} := A + I_N$, diagonal matrix $\bar{D}_{ii} := \sum_{j=1}^N \bar{A}_{ij}$, $\sigma$ is an activation function, and $W^{(l)}$ is a trainable weight matrix. $H^{(l)}$ is the matrix of activations in the $l$-th layer, and $H^{(0)}$ is the input data.
Theoretical analysis suggests that graph convolutional networks provide a localized first-order approximation of spectral graph convolution~\cite{kipf2017semi}.

\subsection{Neural Networks on SPD manifolds}\label{sec:NN_SPD}
The SPD matrix-valued network architecture in this section is derived from the following two studies.
\subsubsection{SPDNet~\cite{huang2017riemannian}} SPDNet introduces network architectures that operate on SPD matrices, incorporating the following layers:
\begin{itemize}
\item BiMap: This layer performs the bi-map transformation $WSW^T$ on spatial covariance matrix $S = U \Sigma U^T$.
The transformation matrix $W$ is typically required to have a full-row rank.
\item ReEig: This layer performs $U \max(\epsilon I, \Sigma) U^T$, where $\epsilon$ is a rectification threshold, and $I$ denotes the identity matrix.
\item LOG: This layer maps spatial covariance matrix $S$ onto its tangent space at identity matrix $I$ using $U\log(\Sigma)U^T$.
\end{itemize}
These layers are designed to ensure that the properties of symmetric and positive definite are maintained throughout the learning process.

\subsubsection{Riemannian Batch Normalization~\cite{brooks2019riemannian}}
This layer is designed as a type of SPD matrix-valued network architecture that utilizes parallel transport for batch centering and biasing around the Riemannian barycenter. 
Typically, parallel transport serves as a means to connect local information between various points because manifolds are the local concept, where basic operations that are simple in Euclidean spaces become intricate, leading to a distinct calculus on manifolds. 
This layer has been demonstrated to improve classification performance on EEGs, making it a valuable addition to the existing network architecture on SPD manifolds~\cite{ju2022tensor,kobler2022spd}.

\begin{figure*}[!h]
  \centering
  \includegraphics[width=0.85\linewidth]{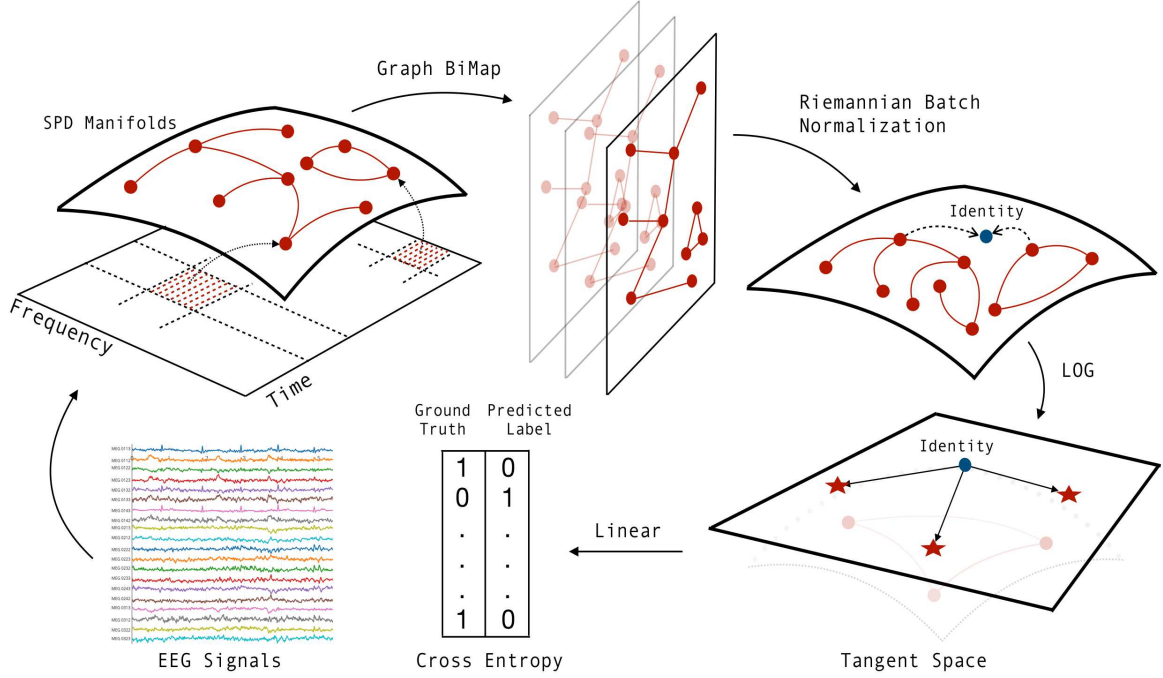}
  \caption{Illustration of Architecture of Graph-CSPNet: The EEG signal is divided into multiple segments in the time-frequency domain. The spatial covariance matrices obtained from these segments serve as the vertices of the time-frequency graph, which is constructed using a novel nonparametric statistical approach. Once the time-frequency graph is built, we employ an SPD matrix-valued graph convolutional network, consisting of the \emph{Graph BiMap} layer and the \emph{Riemannian Batch Normalization} layer, to extract relevant classification information while maintaining discriminative power between different task classes. Subsequently, we apply a logarithmic mapping (\emph{LOG}) layer to transform the SPD matrices onto the tangent space, which are then fed into the cross-entropy loss function for further computation.
\label{architecture}
} 
\end{figure*}

\section{Methodology}\label{methodology}

This section presents a novel geometric deep learning approach, Graph-CSPNet, for MI-EEG classification. The architecture of Graph-CSPNet is depicted in Figure~\ref{architecture}, and a table containing the parameters for each layer can be found in Section~\ref{appendix:architecture}.

\subsection{Time-Frequency Distribution}
A novel time-frequency distribution consisting of SPD matrices derived from EEG segments is constructed in the following way: given a segmentation plan on the time-frequency domain with or without overlapping $\{ \Delta t_i \times \Delta f_i\}_{i\in \mathcal{I}}$, the time-frequency distribution consists of EEG spatial covariance matrices $\{S(\Delta t_i \times \Delta f_i)\}_{i\in \mathcal{I}}$, where $S(\Delta t_i \times \Delta f_i):=\bar{X}_i {\bar{X}_i}^{\top}$ is the covariance matrix of band-pass filtered EEG signal $\bar{X}_i \in \mathbb{R}^{n_C \times \Delta t_i}$ within bandwidth $\Delta f_i$. 
The determination of a segmentation plan $\{ \Delta t_i \times \Delta f_i\}_{i\in \mathcal{I}}$ relies on changes in ongoing EEG activity, as evidenced by the appearance of the ERD/ERS effect that is induced by cognitive and motor processing. 
The ERD effect is characterized by a localized decrease in amplitude, while the ERS effect is marked by an increase in rhythmic activity amplitude. 
Both of these effects are highly specific to the frequency band of the event.
It is essential to consider the frequency discretization $\Delta f$ when working with traditional frequency bands, including $\delta$ (< 4 Hz), $\theta$ ($4 \sim 7$ Hz), $\mu$ ($8 \sim 13$ Hz), $\beta$ ($14 \sim 30$ Hz), and $\gamma$ (> 30 Hz) activity, which aligns closely with neurophysiological mechanisms. 
Due to the subject/user-specific nature of event-related discrimination, the discretization of frequency bands and time intervals can vary depending on the experiment.

\begin{remark}
To provide a concrete example, in the case of hand movement imagination, a time resolution of $\Delta t =  125$ ms is applicable to effectively capture the occurrence of the ERD/ERS effects that are specific to the $\mu$ and $\beta$ frequency bands, which offer the most effective discrimination. Precise selection of the time resolution is essential to ensure that the neural activity corresponding to these frequency bands is accurately detected and properly analyzed~\cite{pfurtscheller1997eeg,pfurtscheller2001functional}.
\end{remark}

\subsection{Time-Frequency Graph}\label{sec:graph}
To characterize the time-frequency distribution mentioned above, we construct a time-frequency graph denoted as $\mathcal{G}(\mathcal{V}, \mathcal{E})$. The vertices of this graph represent EEG spatial covariance matrices in the time-frequency distribution.
The edges of the time-frequency graph are created using a nonparametric statistical approach called the $\epsilon$-neighborhoods approach. This approach involves connecting two vertices, $S_i$ and $S_j$, with an edge if $d_{g^{AIRM}}(S_i, S_j)^2 < \epsilon$.

Expressly, we assume that brain activities generate a time evolution effect on the power spectrum of the EEG signals along the time axis.
To capture this effect, we employ a modified $\epsilon$-neighborhoods approach that establishes adjacency between two vertices, $S_i = S(\Delta t_i \times \Delta f_i)$ and $S_j = S(\Delta t_j \times \Delta f_j)$, if they fall within a box $\mathcal{B}_{\epsilon_1, \epsilon_2}$ in the time-frequency domain.
This box is defined by a box width $\epsilon_1\geq0$ on the time axis and a box height $\epsilon_2\geq0$ on the frequency axis. Therefore, $S_i$ and $S_j \in \mathcal{B}_{\epsilon_1, \epsilon_2}$ must satisfy the following conditions:
\begin{align*}
0 \leq \text{mid}(\Delta t_i)-\text{mid}(\Delta t_j) &\leq \epsilon_1; \\
|\text{mid}(\Delta f_i)-\text{mid}(\Delta f_j) | &\leq \epsilon_2,
\end{align*}
where "mid" represents the midpoint value of an interval. Note that the time evolution in this study occurs in the forward time flow direction, making it unnecessary to take the absolute value of the difference between the midpoints of $\Delta t$ in the above formula.

The adjacency matrix $A$ of $\mathcal{G}(\mathcal{V}, \mathcal{E})$ is used to store the similarities between pairs of vertices and is defined as follows,
\[
A = \left
\{ 
\begin{array}{rl}
e^{-d_{g^{AIRM}}^2(S_i, S_j)/t}, &  \mbox{if} \hspace*{0.4em} S_i \text{ and } S_j \text{ are adjacent}; \\ 
&\\
0, & \mbox{others.}  
\end{array}
\right.
\]
where preset Gaussian kernel width $t > 0$. 
These similarities are computed using the radial basis function kernel, which considers the Riemannian distance between adjacent vertices. If two vertices are not adjacent, the similarity is set to zero. In other words, if $S_i$ and $S_j$ are non-adjacent, the similarity value is assigned as zero.
The pseudocode for generating adjacency matrix $A$ refers to Section~\ref{sec:generation_A}.

This approach is referred to as the Local Graph Topology (LGT) method, as it effectively captures the local relationships within the time-frequency distribution.
By utilizing the LGT method, we gain a more comprehensive understanding of the intricate connections between EEG spatial covariance matrices within the time-frequency distribution.

\begin{remark}
1). Edge Weights: The weight of the edge in the time-frequency graph is calculated using a radial basis function kernel with Riemannian distance, as the discriminative power between task classes is closely related to the Riemannian distance.

2). Connected Components: According to the LGT approach, each frequency component ($\theta, \mu, \beta$, and $\gamma$ bands) forms a connected component in the adjacency matrix $A$ of the time-frequency graph, as shown in Table~\ref{tab:non-uniform}, 
\begin{equation*}
A = 
\begin{pmatrix}
A_{\theta} &  &  & \\
 & A_{\mu} &  &  \\
&& A_{\beta}  &  \\
& & & A_{\gamma} \\
\end{pmatrix}.
\end{equation*}
\end{remark}

\begin{table*}[!b]
\caption{Brief introductions to the five selected datasets, with the original settings specified in parentheses. The textual descriptions in the following text have been modified from the original settings to suit our purposes better.
~\label{tab:dataset_description}
}
\centering 
\begin{tabular}{l  l  l  l  l  l  l  l  l}
\toprule
 Dataset & Subjects & Channels  & Classes  & Trials/Session & Length & Imagery Period & Sampling Rate & Sessions  \\
\midrule
KU & 54 & 20 (62) & 2: left/right hand & 200 &  2.5 s & 1 to 3.5 s & 1000 Hz  & 2 \\
Cho2017 & 49 (52) & 20 (64) & 2: left/right hand & 200 & 3 s & 3 to 6 s & 512 Hz & 1 \\
BNCI2014001 & 9 & 22 & 4: left/right hand,feet,tongue & 288 &  4 s & 2 to 6 s & 250 Hz  & 2 \\
BNCI2014002 & 14 & 15 & 2: right hand,feet & 160 & 5 s & 3 to 8 s & 512 Hz & 1 \\
BNCI2015001 & 12 & 13 & 2: right hand,feet & 200 &  5 s & 3 to 8 s & 512 Hz  & 2 (2 or 3) \\
\bottomrule
\end{tabular}
\end{table*}

\subsection{Graph BiMap Layer}
This section introduces an SPD matrix-valued graph neural network that aims to extract discriminative information from the time-frequency graph.
To address this, we construct the layer-wise propagation rule for this SPD matrix-valued graph neural network as follows:
\begin{equation*}
H^{(l+1)} \gets \text{RBN} \Big( \text{ReEig} \big( W^{(l)} (\bar{D}^{-1} \bar{A}^{(l)}) H^{(l)} {W^{(l)}}^\top \big) \Big), 
\end{equation*}
where $\bar{A}^{(l)} := A^{(l)} + I_N$, $\bar{D}_{ii} := \sum_j \bar{A}^{(l)}_{ij}$, and $W^{(l)}$ is a trainable transformation matrix with the full-row rank. 
The RBN (Riemannian batch normalization) and ReEig layers are introduced in Section~\ref{sec:NN_SPD}. 
$H^{(l)} \in \mathbb{R}^{|\mathcal{V}| \times n_C^2}$ is a node function in the $l^{th}$ layer. 
In particular, $H^{(0)}:=(S^1, \cdots, S^N)$, $A^{(0)}$ is the time-frequency graph, and $\bar{A}^{(l)} := I_N$, for $l \geq 1$.

The proposed layer-wise propagation rule is structurally similar to the classical rule in graph convolutional network architectures. The original linear mapping is replaced with the BiMap mapping, which enables operations on SPD matrices. BiMap ensures that Riemannian distances are consistently preserved during the learning process while maintaining symmetric positive definiteness. This is crucial for MI-EEG classification tasks. 
Furthermore, instead of using $\bar{D}^{-\frac{1}{2}} \bar{A}^{(l)} \bar{D}^{-\frac{1}{2}}$, we employ row-normalized $\bar{D}^{-1} \bar{A}^{(l)}$ because we aim to perform normalization only in the time direction. Additionally, we utilize the non-linear operator ReEig, which drops the smallest eigenvalues, as described in Section~\ref{appen:projection}, to prevent matrix degeneracy. 

However, the proposed rule still differs from the classical one. In the following, we will discuss these differences in detail, specifically focusing on their relevance to our application. A perturbation analysis for the spectrum change induced by transformation $\bar{D}^{-1} \bar{A}^{(l)}$ is presented as follows,
\begin{theorem}~\label{thm:perturbation}
Given a time-frequency graph $\mathcal{G}(\mathcal{V}, \mathcal{E})$. For the $l^{th}$ graph BiMap layer, let $A^{(l)}$ be its $|\mathcal{V}|\times |\mathcal{V}|$ adjacency matrix.
For $i \in \{1, ..., |\mathcal{V}|\}$, the perturbated spatial covariance matrix $\bar{S}_i \in \mathbb{R}^{n_C \times n_C}$ on ($\mathcal{S}_{++}, g^{AIRM}$) is wtitten in to the original spatial covariance matrix $S_i$ and a graph-based perturbation term, i.e., $\bar{S}_i = S_i + \Delta S_i$, where $\Delta S_i := A^{(l)}[i, :] (S_1, \cdots, S_{|\mathcal{V}|})^{\top}$.
Then, the spectrum of the spatial covariance matrix has a perturbation ratio as follows, 
\begin{align}\label{equ:Thm1}
\big(1-N_i C_i \big) \leq \frac{\lambda (\bar{S}_i)}{\lambda (S_i)} \leq \big(1+N_i C_i\big), 
\end{align}
where $C_i = \max_{(i, j)\in \mathcal{E}} \{\exp{(\lambda_{ij})}\}$, $N_i =  \Big| \{j \big| (i, j)\in\mathcal{E}\} \Big|$, and $\lambda_{ij}$ is largest eigenvalue of $\log (S_i^{-\frac{1}{2}} S_j S_i^{-\frac{1}{2}})$.
\end{theorem}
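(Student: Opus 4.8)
The plan is to reduce the two-sided eigenvalue ratio to a single Loewner (positive semidefinite) ordering sandwich of the form $S_i \preceq \bar{S}_i \preceq (1 + N_i C_i)\, S_i$, after which the bound on $\lambda(\bar{S}_i)/\lambda(S_i)$ follows immediately from Weyl's monotonicity theorem applied eigenvalue-by-eigenvalue. The entire argument hinges on converting the scalar quantity $\lambda_{ij}$ into a matrix inequality comparing $S_j$ with $S_i$, and on the observation that the edge weights stored in $A^{(l)}$ never exceed one.

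First I would decode the meaning of $\lambda_{ij}$. By definition it is the largest eigenvalue of $\log(S_i^{-1/2} S_j S_i^{-1/2})$; since the matrix logarithm acts on the spectrum, $\exp(\lambda_{ij})$ is precisely the largest eigenvalue of the congruent matrix $M := S_i^{-1/2} S_j S_i^{-1/2}$. Hence $M \preceq \exp(\lambda_{ij})\, I$, and conjugating by $S_i^{1/2}$ (a congruence, which preserves the Loewner order) yields the key matrix inequality $S_j \preceq \exp(\lambda_{ij})\, S_i \preceq C_i\, S_i$, where the last step uses $C_i = \max_{(i,j)\in\mathcal{E}} \exp(\lambda_{ij})$.

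Next I would control the perturbation term $\Delta S_i = \sum_{j:(i,j)\in\mathcal{E}} A^{(l)}_{ij}\, S_j$. Because each edge weight is a Gaussian kernel value $A^{(l)}_{ij} = \exp(-d_{g^{AIRM}}^2(S_i,S_j)/t) \in (0,1]$ and each $S_j \succ 0$, the sum is a nonnegative combination of positive semidefinite matrices, so $\Delta S_i \succeq 0$. Combining the weight bound $A^{(l)}_{ij} \leq 1$ with the comparison $S_j \preceq C_i S_i$ across the $N_i$ neighbors gives $\Delta S_i \preceq \sum_{j} C_i S_i = N_i C_i\, S_i$. Adding $S_i$ throughout produces the announced sandwich $S_i \preceq \bar{S}_i \preceq (1 + N_i C_i)\, S_i$.

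To finish, Weyl's monotonicity theorem states that $A \preceq B$ forces $\lambda_k(A) \leq \lambda_k(B)$ for every index $k$ when eigenvalues are listed in matching order; applying it to both sides of the sandwich yields $\lambda_k(S_i) \leq \lambda_k(\bar{S}_i) \leq (1 + N_i C_i)\lambda_k(S_i)$, i.e. $1 \leq \lambda_k(\bar{S}_i)/\lambda_k(S_i) \leq 1 + N_i C_i$, which sits inside the stated interval since $N_i C_i \geq 0$. The main obstacle I anticipate is the first step: faithfully translating the scalar $\lambda_{ij}$ (an eigenvalue of a logarithm) into the operator inequality $S_j \preceq C_i S_i$ and justifying that congruence by $S_i^{1/2}$ preserves the ordering. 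Everything afterward is a routine combination of positivity and Weyl's theorem; it is also worth remarking that this argument in fact delivers the sharper lower bound $\lambda(\bar{S}_i)/\lambda(S_i) \geq 1$, so the symmetric form $1 - N_i C_i$ in the statement is a (presumably deliberate) slackening that keeps the two-sided presentation uniform.
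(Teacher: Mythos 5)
Your proposal is correct and follows essentially the same route as the paper's proof: translating $\exp(\lambda_{ij})$ into the operator bound $S_j \preceq \exp(\lambda_{ij})\, S_i \preceq C_i\, S_i$ via congruence by $S_i^{1/2}$, summing over the $N_i$ neighbors using the edge-weight bound $a_{ij} \leq 1$ to obtain a Loewner sandwich around $\bar{S}_i$, and concluding with Weyl's monotonicity theorem. Your closing observation is also accurate --- the paper uses the symmetric two-sided bound $-\exp(\lambda_{ij}) I \preceq S_i^{-\frac{1}{2}} S_j S_i^{-\frac{1}{2}} \preceq \exp(\lambda_{ij}) I$ even though the middle matrix is positive definite, which is why its stated lower bound is the slack $1 - N_i C_i$ rather than your sharper $\lambda(\bar{S}_i)/\lambda(S_i) \geq 1$.
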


It is imperative that we initially establish the subsequent Lemma to prove Theorem~\ref{thm:perturbation}.

\begin{lemma}\label{l1}
Let $\{S_i\}_{i=1}^N$ be a set of SPD matrices. Then, any linear combination $\sum_{i=1}^N \alpha_i S_i$ is still SPD for $\alpha_i > 0$.
\end{lemma}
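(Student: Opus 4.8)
The plan is to verify directly that the linear combination $M := \sum_{i=1}^N \alpha_i S_i$ satisfies the two defining properties of membership in $\mathcal{S}_{++}$ recalled in the preliminary section, namely symmetry and strict positivity of the associated quadratic form. Both properties follow from elementary linearity arguments, so no deep machinery is required; the only substantive ingredient is the strict positivity of the coefficients $\alpha_i$.

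First I would establish symmetry. Since transposition is linear and each $S_i$ is symmetric, we have $M^\top = \sum_{i=1}^N \alpha_i S_i^\top = \sum_{i=1}^N \alpha_i S_i = M$, so $M$ is symmetric. Next I would check positive definiteness. Fixing an arbitrary nonzero vector $x \in \mathbb{R}^{n_C} \setminus \{0\}$ and expanding the quadratic form by linearity gives
\begin{align*}
x^\top M x = \sum_{i=1}^N \alpha_i \, (x^\top S_i x).
\end{align*}
Because each $S_i$ is positive definite, every factor $x^\top S_i x$ is strictly positive, and by assumption every $\alpha_i > 0$; hence each summand is strictly positive, and so is their finite sum. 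Thus $x^\top M x > 0$ for all $x \neq 0$, which together with symmetry shows $M \in \mathcal{S}_{++}$.

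The statement carries no real obstacle, so the main thing to keep honest is the role of the hypotheses: strict positivity of the sum genuinely requires every $\alpha_i > 0$, since mere nonnegativity would only deliver positive semidefiniteness, and finiteness of the index set is what guarantees that a sum of strictly positive numbers stays strictly positive. If desired, the same argument extends verbatim to the slightly weaker hypothesis that at least one $\alpha_i$ is strictly positive while the remaining coefficients are nonnegative, since then $x^\top M x \geq \alpha_i \,(x^\top S_i x) > 0$.
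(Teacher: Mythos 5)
Your proof is correct and follows essentially the same route as the paper's: symmetry via linearity of transposition, and positive definiteness by expanding $x^\top M x = \sum_{i=1}^N \alpha_i\,(x^\top S_i x)$ and using $\alpha_i > 0$ together with the positive definiteness of each $S_i$. Your closing remarks on the role of strictness and finiteness (and the weakening to one strictly positive coefficient) are accurate but go beyond what the paper records.
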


\begin{proof}
The symmetry is obtained by $(\sum_{i=1}^N \alpha_i S_i)^{\top} = \sum_{i=1}^N \alpha_i S_i^{\top} = \sum_{i=1}^N \alpha_i S_i$, and the positive definite is achieved by $v^{\top} (\sum_{i=1}^N \alpha_i S_i)  v =  \sum_{i=1}^N \alpha_i (v^{\top} S_i v) > 0$, for any $v \neq 0$. This concludes the proof. 
\end{proof}

\begin{proof}[Proof of Theorem~\ref{thm:perturbation}]
Consider the perturbation $\Delta S_i $ on each node $S_i$ ($1\leq i \leq |\mathcal{V}|$), which is taken as a linear combination of adjacent nodes as $S_i + \Delta S_i = S_i + \sum_{(i, j) \in \mathcal{E}} a_{ij} S_j$, where $a_{ij}:=e^{-d^2_{g^{AIRM}}(S_i, S_j)/t} \leq 1$ with a large $t$ is the similarity value for the edge between node $S_i$ and $S_j$ in adjacency matrix $A^{(l)}$. 
Each $S_i + \Delta S_i$ remains symmetric positive and definite according to Lemma~\ref{l1}.

The Riemannian distance between node $S_i$ and its adjacent node $S_j$ is $d_{g^{AIRM}}(S_i, S_j) := ||\log (S_i^{-\frac{1}{2}} S_j S_i^{-\frac{1}{2}})||_{\mathcal{F}}$. 
Notice that $|\log (S_i^{-\frac{1}{2}} S_j S_i^{-\frac{1}{2}})$ is a $n_C\times n_C$ symmetric matrix and bounded by an inequality according to the Rayleigh-Ritz theorem~\cite{demmel1997applied} as $-\lambda_{ij} I \preceq \log (S_i^{-\frac{1}{2}} S_j S_i^{-\frac{1}{2}}) \preceq \lambda_{ij} I$, where $\lambda_{ij}$ is the largest Rayleigh quotient of $ \log (S_i^{-\frac{1}{2}} S_j S_i^{-\frac{1}{2}})$.
Since $\exp({\lambda_{ij}})$ is the eigenvalue of $S_i^{-\frac{1}{2}} S_j S_i^{-\frac{1}{2}}$ according to the definition of exponential matrix and $\exp$ is monotone increasing, we have $-\exp({\lambda_{ij}}) I \preceq S_i^{-\frac{1}{2}} S_j S_i^{-\frac{1}{2}} \preceq \exp({\lambda_{ij}}) I$.

Note that the pertubation $S_i + \Delta S_i =  S_i + \sum_{(i, j) \in \mathcal{E}} a_{ij} S_j = S_i^{\frac{1}{2}} \Big(I + S_i^{-\frac{1}{2}}\big(\sum_{(i, j) \in \mathcal{E}} a_{ij} S_j \big) S_i^{-\frac{1}{2}}\Big)S_i^{\frac{1}{2}}$.
Then, we have $\big(1- N_i C_{i}\big) S_i \preceq S_i + \sum_{(i, j) \in \mathcal{E}} a_{ij} S_j \preceq \big(1+ N_i C_{i}\big) S_i$, where $N_i = |\{j| (i, j) \in \mathcal{E}\}|$ and $C_i = \max_{(i, j)\in \mathcal{E}} \{\exp{(\lambda_{ij})}\}$. 
Lastly, the spectrum perturbation ratio in Equation~\ref{equ:Thm1} can be obtained using Weyl's monotonicity theorem~\cite{horn1994topics}.
\end{proof}
Theorem~\ref{thm:perturbation} provides both a rough estimated upper and lower bounds for spectrum change ratio $\lambda (\bar{S}_i)/\lambda (S_i)$, for $1 \leq i \leq N$.
This ratio depends solely on the node degree and spectrum between vertices in the time-frequency graph.
As the depth of graph BiMap layers increases, the spectrum change ratio increases accordingly. 

To maintain the lowest possible spectrum change ratio, it is necessary to set the depth of the graph BiMap layer to one, i.e., a choice is to put $\bar{A}^{(l)}:= I_N$, for $l > 1$, following the first layer.
The visualization of the impact of spectrum changes, as stated in Theorem~\ref{thm:perturbation}, is discussed in detail in Section~\ref{sec:spectrum_distribution_shift}.

\subsection{LOG Layer}
After the graph BiMap layers, the resulting SPD matrices are mapped onto the tangent space using the logarithm mapping with the identity matrix as the base point, as explained in the LOG layer in Section~\ref{sec:NN_SPD}.
In particular, this logarithm mapping is well-defined and applicable to any SPD matrix with the same dimension.

\subsection{Loss Function}
For the sake of simplicity, the cross-entropy loss is adopted as the loss function for Graph-CSPNet. 

\begin{table*}[!t]
\caption{A non-overlapping and non-uniform segmentation plan for Graph-CSPNet. The table provided lists each frequency band's time-window length (seconds) as a distinct entity.
~\label{tab:non-uniform}
}
\centering 
\begin{tabular}{l  l  l  l  l  l  l  l  l  l}
\toprule
{} &   \multicolumn{1}{c}{$\theta$ band} &   \multicolumn{1}{|c}{$\mu$ band} &  \multicolumn{4}{|c}{$\beta$ band} &  \multicolumn{3}{|c}{$\gamma$ band} \\
\midrule
 {Dataset/Freq Band (Hz)} & $4 \sim 8$ & $8 \sim 12$ &  $12 \sim 16$  &  $16 \sim 20$ &  $20 \sim 24$ &  $24 \sim 28$  & $28 \sim 32$  & $32 \sim 36$ & $36 \sim 40$ \\
\midrule
KU/Cho2017 & 0.5 & 0.5 &  0.5 &  0.5 &  0.5 &  0.5  & 0.25   & 0.25 & 0.25  \\
BNCI2014001 & 0.25 & 0.25 &  0.25 &  0.25 &  0.25  &  0.25 & 0.125 & 0.125 & 0.125 \\
BNCI2014002/2015001 & 1 & 1  & 1  & 1  & 1  &1  & 0.5  & 0.5 & 0.5  \\
\bottomrule
\end{tabular}
\end{table*}

\section{Experiments}\label{experiments}

\subsection{Evaluation Datasets and Scenarios}\label{datasets}
As part of the evaluation, we employed five widely-used MI-EEG datasets. 
This section offers a concise overview of these datasets.
The data of these datasets cannot be accessed via the MOABB~\footnote{\, The MOABB package comprises a benchmark dataset for state-of-the-art decoding algorithms encompassing 12 open access datasets and over 250 subjects. This package is accessible at~\url{https://
github.com/NeuroTechX/moabb}.} or BNCI Horizon 2020~\footnote{\, The datasets within BNCI Horizon 2020 project is accessible at~\url{http://bnci-horizon-2020.eu/database/data-sets}.}.
All digital signals underwent filtering using Chebyshev Type II filters with 4 Hz intervals. The filters were specifically designed to have a maximum passband loss of 3 dB and a minimum stopband attenuation of 30 dB.
The selection criteria for the five evaluation datasets from MOABB and BNCI Horizon 2020 requires each training session to have more than 150 trial instances, with at least 70 trials for each category.
Table~\ref{tab:dataset_description} summarizes the five evaluation datasets. 

1). KU (also known as \emph{Lee2019\_MI}, from MOABB): It comprises EEG signals obtained from 54 subjects participating in a binary class EEG-MI task. The EEG signals were captured at 1,000 Hz using 62 electrodes. For evaluation, 20 electrodes in the motor cortex region were selected, i.e., FC-5/3/1/2/4/6, C-5/3/1/z/2/4/5, and CP-5/3/1/z/2/4/6. The dataset was divided into two sessions, S1 and S2, each composed of two phases: training and testing. Each phase consisted of 100 trials balanced between right and left-hand imagery tasks, resulting in 21,600 trials available for evaluation. The EEG signals were epoched from the first second to 3.5 seconds in reference to the stimulus onset, resulting in a total duration of 2.5 seconds. 

2). Cho2017 (from MOABB): It conducted a BCI experiment with 52 subjects involving motor imagery movement (MI movement) of the left and right hands. The EEG data were collected using 64 Ag/AgCl active electrodes, with a 64-channel montage based on the international 10-10 system to record the EEG signals at a sampling rate of 512 Hz. The recording commenced at second 0, following the prompt, and continued for 3 seconds until the conclusion of the cross period. For evaluation, 20 electrodes were selected from the motor cortex region, comprising FC-5/3/1/2/4/6, C-5/3/1/z/2/4/5, and CP-5/3/1/z/2/4/6. It is worth noting that the EEG data of subjects No.32, No.46, and No.49 were omitted, reducing the total number of subjects to 49.

3). BNCI2014001 (also known as BCIC-IV-2a, from BNCI Horizon 2020): Its studies involved 9 participants who executed a motor imagery task comprising four classes: left hand, right hand, feet, and tongue. The task was recorded using 22 Ag/AgCl electrodes and three EOG channels at a sampling rate of 250 Hz. The recorded data were filtered between 0.5 and 100 Hz, with a 50 Hz notch filter applied. The study was conducted in two sessions on separate days, with each participant completing 288 trials in total (six runs of 12 cue-based trials for each class). The EEG signals were epoched from the cue's onset at Second 2 to the end of the motor imagery period at Second 6, resulting in a total duration of 4 seconds.

4). BNCI2014002 (from BNCI Horizon 2020): 13 participants were tasked with carrying out sustained 5-second kinaesthetic MI of their right hand and feet, as directed by the cue. The session comprised eight runs consisting of 80 trials for each class, amounting to 160 trials per participant. EEG measurement was conducted using a biosignal amplifier and active Ag/AgCl electrodes with a sampling rate of 512 Hz. A total of 15 electrodes were placed. The EEGs were epoched from the cue onset at second 3 to the end of the motor imagery period at second 8, resulting in a total duration of 5 seconds. 

5). BNCI2015001(from BNCI Horizon 2020): It involved 12 subjects who performed a sustained right-hand versus both feet imagery task. Most of the study was composed of two sessions, conducted on consecutive days, with each participant completing 100 trials for each class, amounting to 200 trials per participant. Four subjects participated in a third session, but the evaluation did not include these data. The data were acquired at a sampling rate of 512 Hz, utilizing a bandpass filter ranging from 0.5 to 100 Hz and a notch filter at 50 Hz. The recording began at Second 3 with the prompt and continued until the end of the cross period at Second 8, with a total duration of 5 seconds.

We typically evaluate datasets containing two sessions, namely KU, BNCI2014001, and BNCI2015001, using two distinct scenarios. 
The first is known as 10-Fold Cross Validation (CV), whereby we divide each subject's data into ten equally sized class-balanced folds, with nine used for training and one used for testing, and this process is repeated ten times. 
The second scenario is called Holdout, where we employ the first session for training and the second for testing. 
Given that the EEG data for the two sessions is usually gathered on different days, there may be considerable variability. 
For the KU dataset, we utilize the first 100 trials of the second session for validation, with the remaining trials used for testing. 
In contrast, datasets featuring only one session, such as BNCI2014002 and Cho2017, are evaluated solely using the 10-fold CV scenario.

\subsection{Evaluated Segmentation Plans}
This study presents an example of non-overlapping and non-uniform segmentation plans, as shown in Table~\ref{tab:non-uniform}, to evaluate the proposed approach. 
The term "non-overlapping" indicates no overlap between time windows, while "non-uniform" implies that the size of the time windows varies within the frequency band. 
We considered the commonly used method of segmenting signals in the time domain into units of seconds, half-seconds, or quarter-seconds, which BCI researchers have widely adopted in the past years, but it is not based on extensive neurophysiological considerations.
In Table~\ref{tab:non-uniform}, there are 60 segments of EEG signals in the KU/BNCI2014002/BNCI2015001 datasets (i.e., 60 segments = 5 windows $\times$ 6 frequency bands + 10 windows $\times$ 3 frequency bands), 48 segments in the BNCI2014001 dataset (i.e., 48 segments = 4 windows $\times$ 6 frequency bands + 8 windows $\times$ 3 frequency bands), and 33 segments in the Cho2017 dataset (33 segments = 3 windows $\times$ 6 frequency bands + 5 windows $\times$ 3 frequency bands).

\begin{table*}[!t]
\caption{Average accuracies and corresponding standard deviations derived from subject-specific analyses of the KU, Cho2017, BNCI2014001, BNCI2014002, and BNCI2015001 dataset.
Each result in the table is expressed as the average accuracy and its corresponding standard deviation. Notably, the optimal outcome for each analysis is highlighted in boldface, thus providing an enhanced visual representation of the best-performing metrics.
}
\centering 
\begin{tabular}{l | l | llll}
\toprule
Dataset & Scenario & FBCSP & FBCNet & Tensor-CSPNet & Graph-CSPNet \\
\midrule
KU & 10-Fold CV (S1) \% & 64.33 (15.43) & \textbf{73.36} (13.71) & 73.28 (15.10) & 72.51 (15.31)\\
{}   & 10-Fold CV (S2) \% & 66.20 (16.29) & 73.68 (14.97) & 74.16 (14.50) &  \textbf{74.44} (15.52)\\
{}   & Holdout (S1 $\rightarrow$ S2) \% & 59.67 (14.32) & 67.74 (14.52) & 69.50 (15.15) &\textbf{69.69} (14.72)\\
\midrule
Cho2017 &10-Fold CV \% & 61.75 (13.26) & 65.34 (11.14) & 67.30 (12.94) & \textbf{67.51} (12.89) \\
\midrule
BNCI2014001 &10-Fold CV (T) \% &71.29 (16.20) & 75.48 (14.00) & 75.11 (12.68) & \textbf{77.55} (15.63) \\
{(BCIC-IV-2a)} &10-Fold CV (E) \% & 73.39 (15.55) & 77.16 (12.77) & 77.36 (15.27) &\textbf{78.82} (13.40)\\
{} & Holdout (T $\rightarrow$ E) \% & 66.13 (15.54) & 71.53 (14.86) & \textbf{73.61} (13.98) & 71.95 (13.36) \\
\midrule
BNCI2014002 &10-Fold CV \% & 76.07 (13.29) & 79.64 (12.77) & 80.58 (11.87) & \textbf{81.65} (11.74) \\
\midrule
BNCI2015001 &10-Fold CV (A) \% & 79.46 (14.16) & 82.62 (13.11) & 81.29 (14.78) & \textbf{84.62} (12.38) \\
{} & 10-Fold CV (B) \% & 81.96 (11.14) & 84.92 (10.30) & 85.29 (10.54) & \textbf{88.00} (7.87) \\
{} & Holdout (A $\rightarrow$ B) \% & 73.46 (14.09) & 74.50 (16.01) & 79.04 (14.67) & \textbf{79.75} (14.63) \\
\bottomrule
\end{tabular}
\label{tab:tab_acc}
\end{table*}

\subsection{Evaluation Baselines}
This study will compare the proposed approach with a range of baseline methods, including FBCSP, FBCNet, and Tensor-CSPNet. The three selected baselines are representative algorithms utilizing the ERD/ERS effect for the MI-EEG classification during an MI process through various technological periods.

\begin{enumerate}
\item FBCSP~\cite{ang2008filter}: This approach is a prominent method within the CSP family, which utilizes the CSP algorithm to extract sub-band scores from EEG signals and subsequently employs classification algorithms such as Support Vector Machines (SVMs) on the selected features. 
\item FBCNet~\cite{mane2021fbcnet}: This approach is a convolutional neural networks-based classifier designed to classify EEG signals by analyzing EEG patterns within the space-time-frequency domain. Specifically, FBCNet performs on sub-frequency bands of EEGs and is highly skilled at capturing complex EEG patterns. 
\item Tensor-CSPNet: This approach represents the first geometric deep learning approach to MI-EEG classification, capable of exploiting EEG patterns sequentially across the frequency, space, and time domains by leveraging existing deep neural networks on SPD manifolds.
\end{enumerate}
It is worth noting that comparison with other prevalent MI-EEG classifiers, such as the convolutional neural networks-based approaches and Riemannian-based approaches, is not deemed necessary as previous studies~\cite{ju2022tensor} have established that FBCNet and Tensor-CSPNet exhibit superior performance on the subject-specific scenarios of datasets KU and BNCI2014001 under evaluation.
Furthermore, while the proposed method is referred to as graph neural networks, it should be noted that it fundamentally differs from utilizing EEG device channels as graph nodes to construct a graph neural network in the context of MI-EEG classifiers. The method outlined in the text is the most suitable for comparison with the proposed method. Therefore, other graph-based MI-EEG classifiers will not be considered or adopted for this study.

\subsection{Configurations of Network Architecture}
For the deep learning methodology, the choice of neural network configuration plays a crucial role in the performance. 
Hence, we will present neural network configurations for all deep learning-based approaches used in the evaluation.

For baseline FBCNet, it has relatively few neural network hyperparameters that can be tuned. We used the same nine frequency sub-bands as FBCSP and set 16 parallel spatial convolution blocks for all datasets.
For the geometric classifiers, we have adopted simple yet effective network architectures for different scenarios as follows: 
\begin{itemize}
\item Tensor-CSPNet: The network architecture consists of a two-layer BiMap block (BiMap layer + ReEig layer + RBN layer), with input/output dimensions varying across different datasets. The frequency segmentation for all datasets is performed in 4 Hz bandwidth increments from 4 Hz to 40 Hz without any overlap. The time segmentation, however, varies for each dataset, as follows:
1). For the KU dataset, the BiMap block transforms the input dimension from 20 to 30 and then back to an output dimension of 20. The network employs three temporal segmentations: 0 to 1.5 seconds, 0.5 to 2 seconds, and 1 to 2.5 seconds.
2). In the Cho2017 dataset, the BiMap block converts the input dimension from 20 to 30 and then reverts it to an output dimension of 20. The network is divided into three temporal segments: 0 to 1 second, 1 to 2 seconds, and 2 to 3 seconds.
3). For the BNCI2014001 dataset, the BiMap block transforms the input dimension from 22 to 36 and then back to an output dimension of 22. This network has two temporal segmentations: 0 to 0.75 seconds and 0.25 to 1 second.
4). For the BNCI2014002 dataset, the BiMap block increases the input dimension from 15 to 30 and then reduces it back to an output dimension of 15. This network employs five temporal segmentations: 0 to 1 second, 1 to 2 seconds, 2 to 3 seconds, 3 to 4 seconds, and 4 to 5 seconds.
5). For the BNCI2015001 dataset, the BiMap block expands the input dimension from 13 to 30 and then reduces it to an output dimension of 13. This network utilizes five different temporal segmentations: 0 to 1 second, 1 to 2 seconds, 2 to 3 seconds, 3 to 4 seconds, and 4 to 5 seconds.

\item Graph-CSPNet: Graph-CSPNet shares the same neural network architecture as Tensor-CSPNet across all scenarios. However, it employs a different segmentation approach for frequency and time. The specific time-frequency segmentation plan for Graph-CSPNet is outlined in Table~\ref{tab:non-uniform}.
In addition, we typically set the forward time flow to half their maximum possible value, as discussed in Section~\ref{sec:hyperparameters}.
\end{itemize}

\subsection{Classification Performance}

The evaluation is conducted using subject-specific scenario settings. Two datasets, namely KU and BNCI2014001, have been reinitialized with cross-validation indices, employing different network architectures than those utilized in our previous study~\cite{ju2022tensor}. Consequently, the classification accuracies of each baseline may slightly vary from the earlier results within an acceptable range of approximately 1\%. The remaining three datasets are being utilized for the first time in our study. For each scenario, we select the best result among multiple runs.

The \emph{time-space-frequency} principle is fundamental to all three MI-EEG classification algorithms: FBCNet, Tensor-CSPNet, and Graph-CSPNet. Despite differences in network architectures, they all effectively utilize discriminative information from the time, space, and frequency domains.
The segmentation technique employed by Tensor-CSPNet and Graph-CSPNet contributes to their superior performance over FBCNet in the three holdout scenarios. This technique breaks down EEG signals into short, quasi-stationary intervals, thereby reducing distribution shifts.
Among the algorithms, Graph-CSPNet demonstrates a slight advantage in nine out of the eleven scenarios listed in Table~\ref{tab:tab_acc}. This is attributed to its refined time-frequency segmentation technique, which provides a more precise characterization of EEG signals.

\subsection{Hyperparameters in Graph-CSPNet}~\label{sec:hyperparameters}
We will investigate the influence of hyperparameters in the time-frequency graph, such as $\epsilon$ in the LGT method, on the performance of Graph-CSPNet. This analysis is depicted in Figure~\ref{fig:LGT_hyperparameter}. To streamline the evaluation process, the network configuration of Graph-CSPNet will be simplified to include a depth-1 graph BiMap layer with input and output dimensions set to 20.

\begin{enumerate}
\item The utilization of graph topology enables Graph-CSPNet to learn discriminative patterns from the time-frequency graph effectively. This is evident from the significant improvements observed in other time-frequency graph configurations compared to the "non-graph" case, i.e., Time Direction (0,0,0,0) in Figure~\ref{fig:LGT_hyperparameter}. The "non-graph" case refers to the similarity matrix $A^{(0)} := I_N$.

\item Among different time directions, Time Direction (2,2,2,4) yields the best performance, as demonstrated in Figure~\ref{fig:LGT_hyperparameter}. The corresponding transformed spectral distributions can be found in Figure~\ref{fig:spectrum} (c).
Additionally, Time Direction (2,2,2,4) showed statistical significance in the 10-fold CV experiments of the two groups. However, it did not exhibit significance in the holdout experiment.
Statistical significance was determined using the one-tailed Wilcoxon signed-rank test~\cite{demvsar2006statistical} with Bonferroni-Holm correction, with a significance level of $\alpha=0.05$.
Therefore, in all of our experiments, we generally set the forward time flow to the nearest integer values, which is the closest approximation to half-values.
\end{enumerate}

\begin{figure}[!h]
  \centering
  \includegraphics[width=\linewidth]{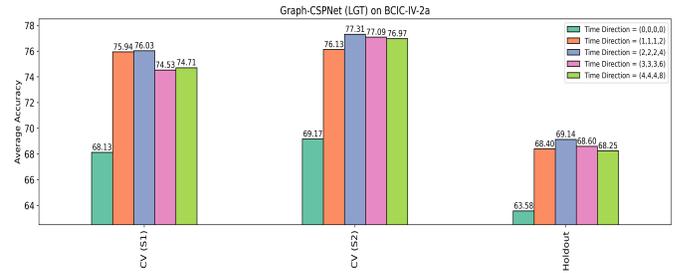}
  \caption{The classification performance of Graph-CSPNet with multiple time and frequency directions is evaluated on the BNCI2014001 (BCIC-IV-2a) dataset. The forward time direction encompasses the intervals $0 \leq x_{\theta}\leq 4$, $0 \leq x_{\mu} \leq 4$, $0 \leq x_{\beta} \leq 4$, and $0 \leq x_{\gamma} \leq 8$. The frequency direction is fixed at Time Direction $(1,1,4, 3)$.
\label{fig:LGT_hyperparameter}
} 
\end{figure}

\begin{figure}[!h]
  \centering
  \includegraphics[width=\linewidth]{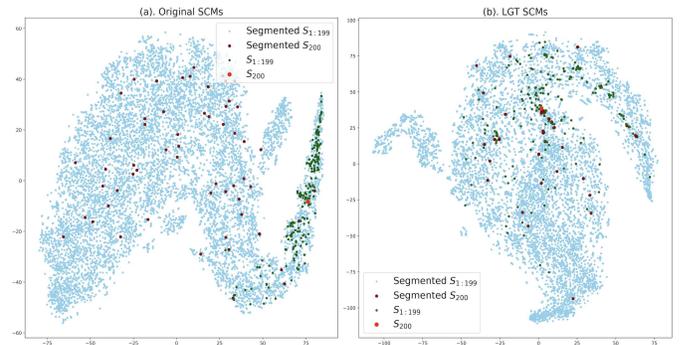}
  \caption{Spectrum Distribution Shift: A two-dimensional projection of Subject No. 1 in the KU dataset using t-SNE is depicted. Each session for the subject consists of a total of 200 trials. The bright red point in each subfigure represents the $200^{th}$ trial in that session, while the 199 green points represent the first 199 trials. These points have undergone dimensionality reduction from $20 \times 20$-dimensional EEG spatial covariance matrices using t-SNE. Furthermore, the 60 dark red points represent the spatial covariance matrices obtained from 60 EEG segments, as specified in the segmentation plan outlined in Table~\ref{tab:non-uniform}, from the last trial (represented by the bright red point). The 11,940 ($=199 \times 60$) blue points also correspond to the spatial covariance matrices derived from the remaining 199 trials.
\label{fig:Dist_LGT}
} 
\end{figure}

\begin{figure*}[!h]
\center

\begin{subfigure}{0.33\textwidth}
\includegraphics[width=\linewidth]{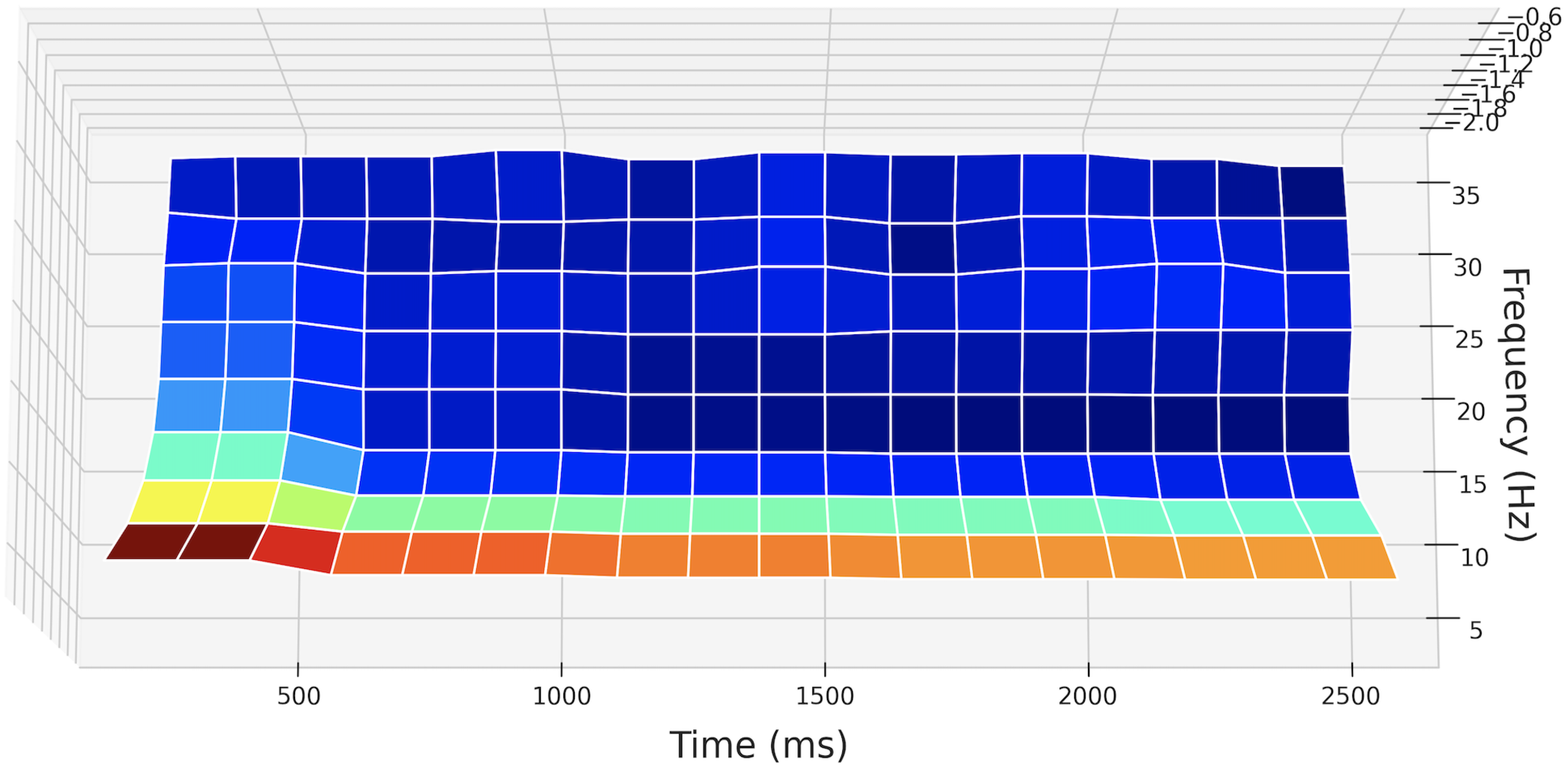}
\caption{Discrete Spectrogram}~\label{fig:a}
\end{subfigure}\hspace*{\fill}
\begin{subfigure}{0.33\textwidth}
\includegraphics[width=\linewidth]{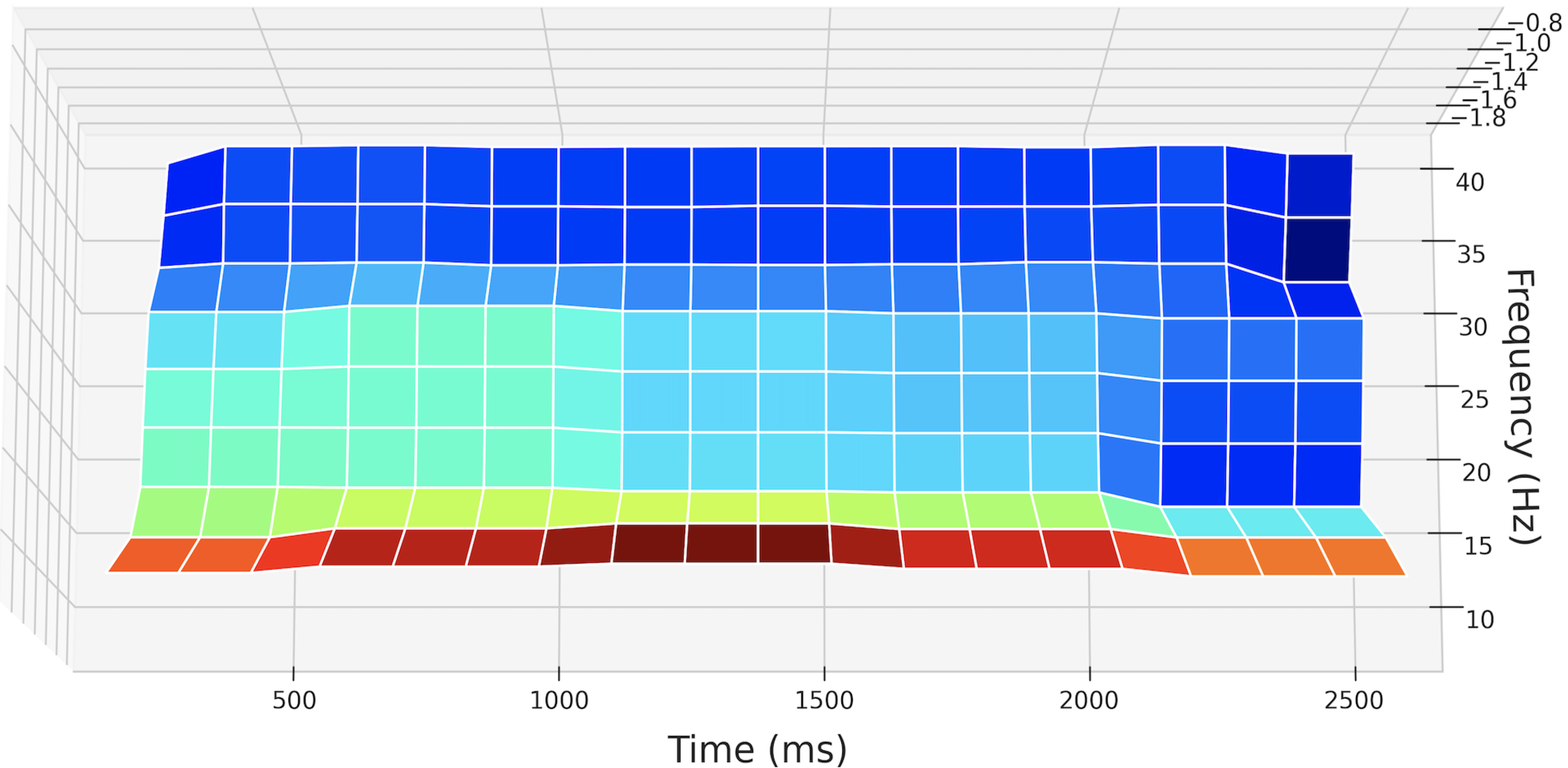}
\caption{Time Direction (1,1,1,2).}~\label{fig:b}
\end{subfigure}\hspace*{\fill}
\begin{subfigure}{0.33\textwidth}
\includegraphics[width=\linewidth]{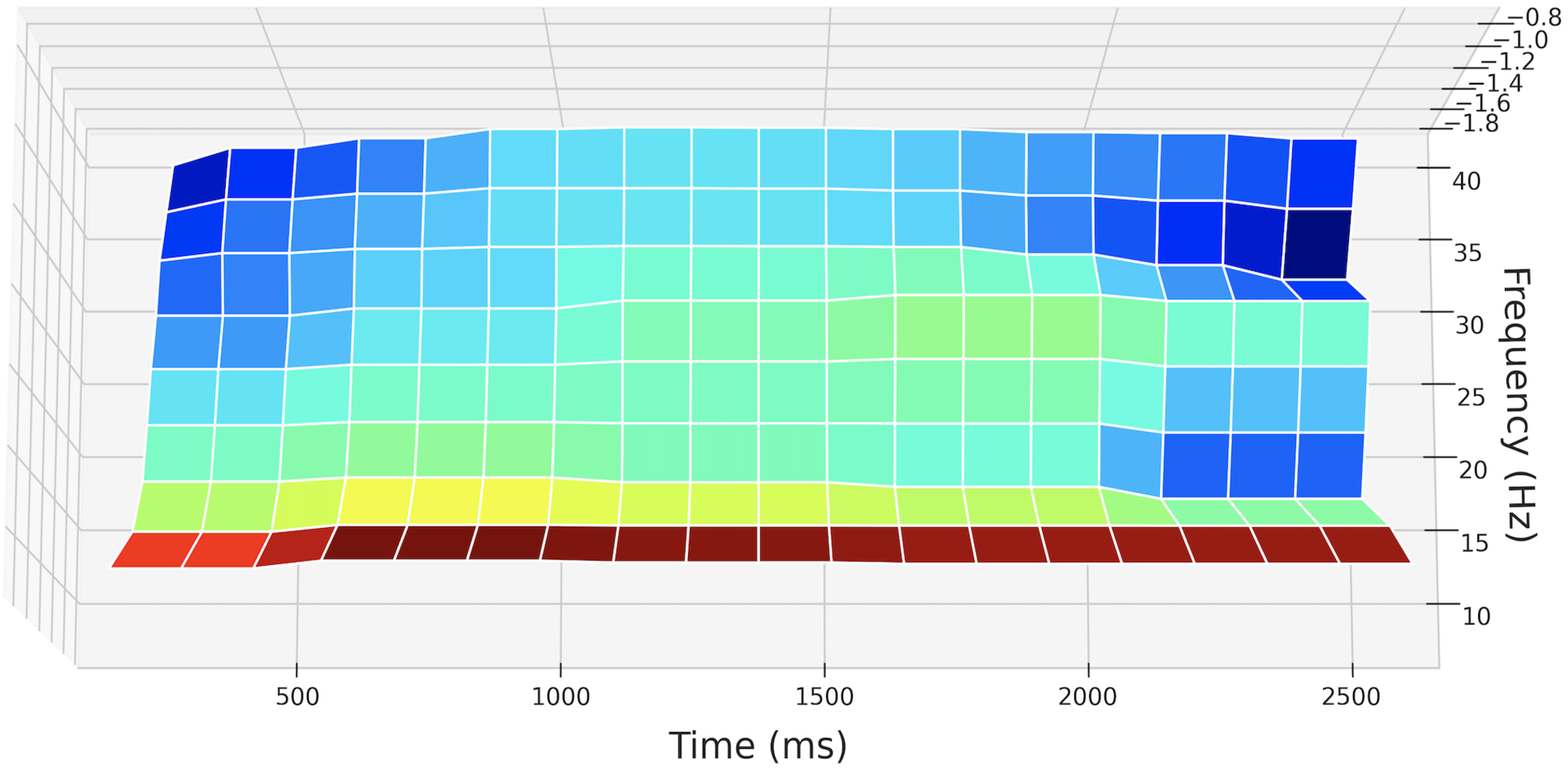}
\caption{Time Direction (2,2,2,4).}~\label{fig:c}
\end{subfigure}

\caption{Discrete Spectrograms of Variant Configuration Time-Frequency Graphs:
The LGT method has four "time direction" numbers representing the forward steps in the components of $\theta, \mu, \beta$, and $\gamma$. 
The "frequency direction" numbers are always set to Time Direction (1, 1, 4, 3).\\\\
The discrete spectrograms, ranging from (a) to (c), are calculated by evaluating the lower frequency band's spectrum power on the grids ($4\sim16$ Hz) across every five grids on the time axis, with a grid width of 500 ms and a height of 4 Hz. The higher frequency band ($16\sim40$ Hz) is calculated across each grid, with a grid width of 250 ms and a height of 4 Hz. 
Spectrogram (a) represents the original spectrum distribution of Subject No. 1 in the KU dataset, while spectrograms (b) and (c) are the spectrum distributions after the LGT method on (a).
}~\label{fig:spectrum}
\end{figure*}

\begin{figure}[!h]
\center

\begin{subfigure}{0.2\textwidth}
\includegraphics[width=\linewidth]{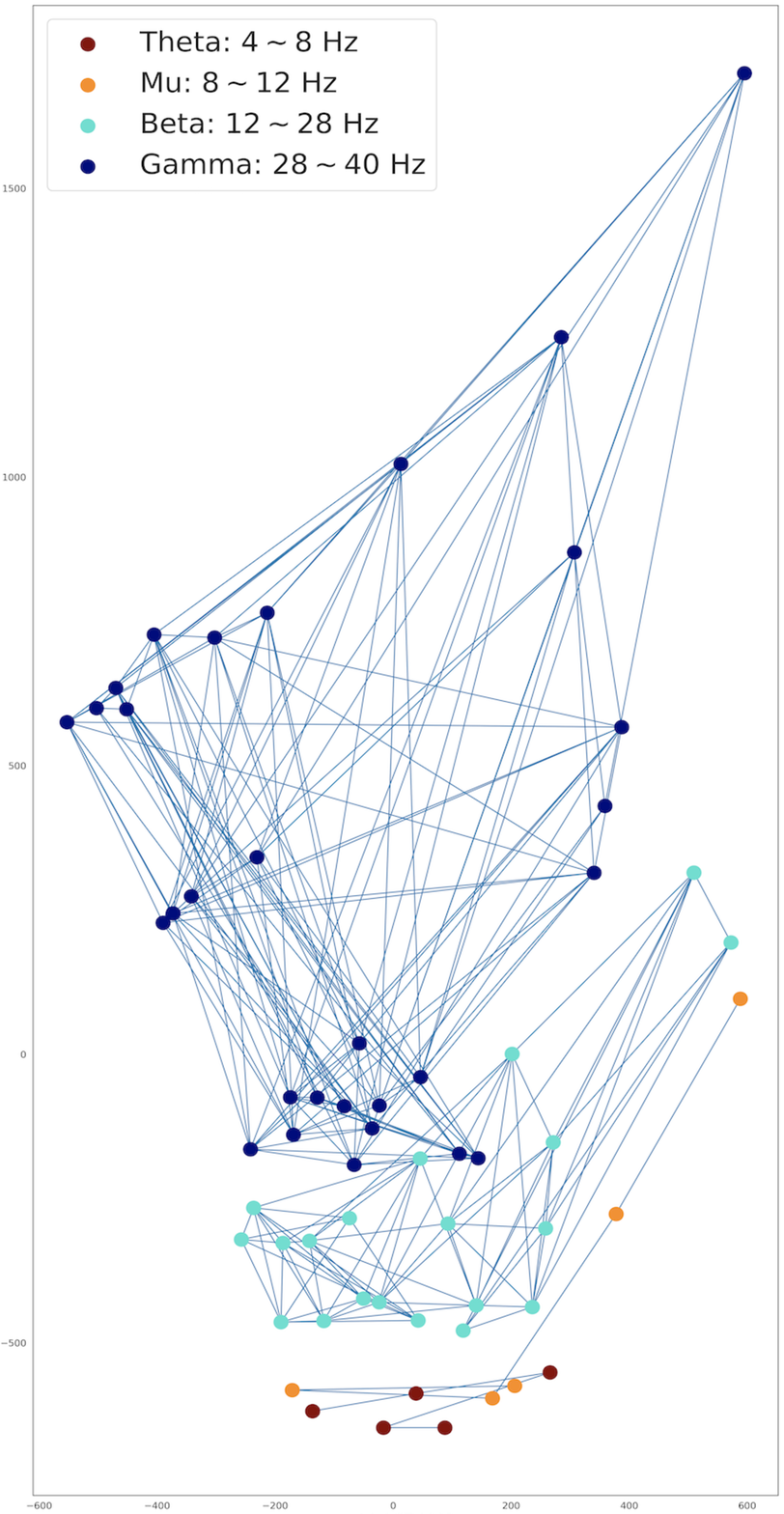}
\caption{Time Direction (1,1,1,2).}~\label{fig:a}
\end{subfigure}\hspace*{\fill}
\begin{subfigure}{0.2\textwidth}
\includegraphics[width=\linewidth]{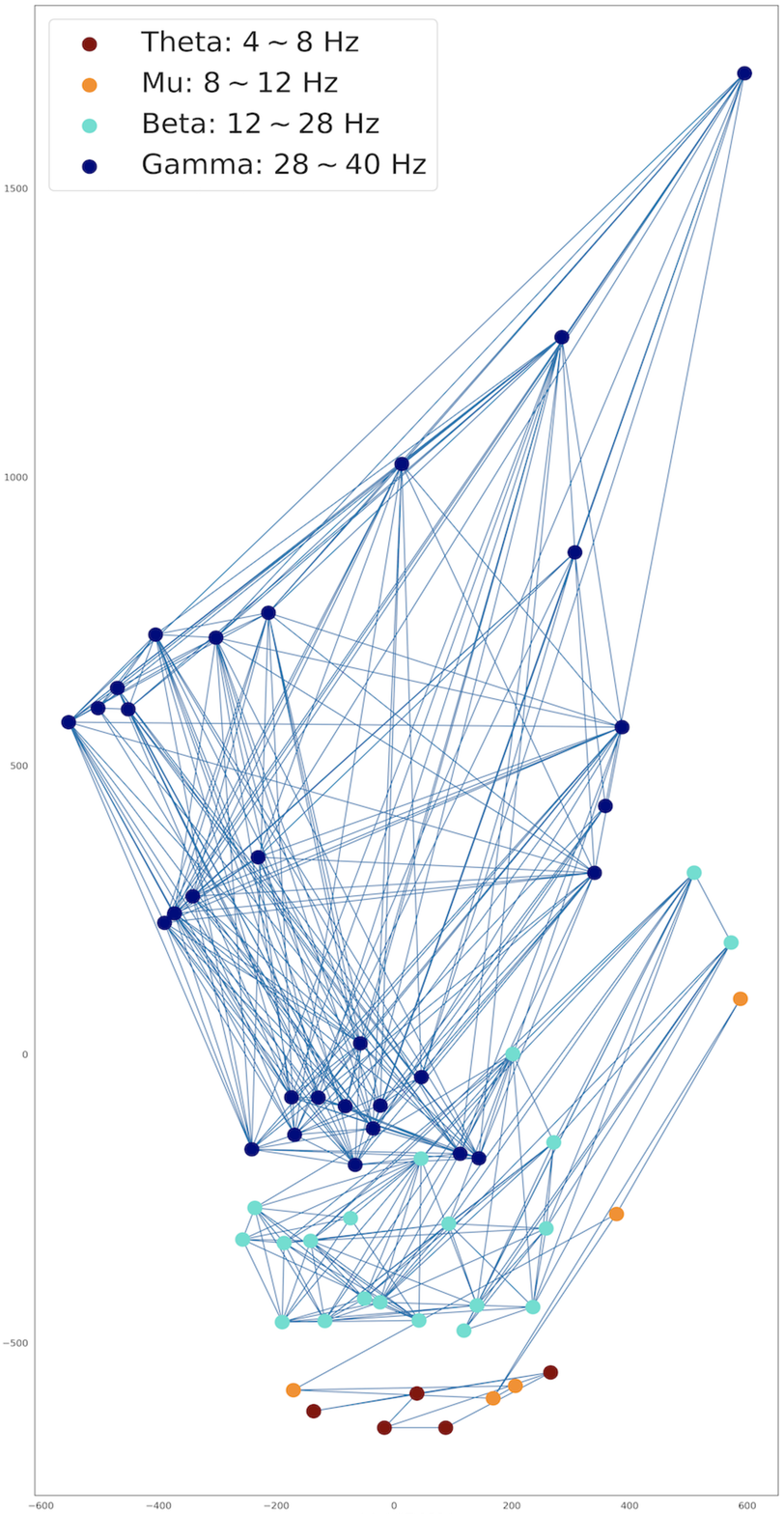}
\caption{Time Direction (2,2,2,4).}~\label{fig:b}
\end{subfigure}\hspace*{\fill}
\begin{subfigure}{0.20\textwidth}
\end{subfigure}

\caption{Variant Configurations of Time-frequency Graphs: 
The time-frequency graphs (a) and (b), which are derived from spectrograms (b) and (c), respectively, in Figure~\ref{fig:spectrum}, contain 60 nodes and 390 edges. 
Each node represents a grid or a couple of grids, with low time resolution for the low frequency (4 to 16 Hz) and high time resolution for the high frequency (16 to 40 Hz). 
The spectrum of a node evaluates adjacent nodes along the time axis and consists of four graph components corresponding to four frequency bands, i.e., $\theta, \mu, \beta$, and $\gamma$. 
The edge weight of each two adjacent nodes is the geodesic distance between the two points on ($\mathcal{S}_{++}, g^{AIRM}$) and is reconstructed using the multidimensional scaling algorithm. 
}~\label{fig:graph}
\end{figure}


\subsection{Spectrum Distribution Shift}\label{sec:spectrum_distribution_shift}

The proposed LGT method serves as a nonparametric statistical approach for initializing the topology of the time-frequency graph. According to Theorem~\ref{thm:perturbation}, both approaches will introduce changes in the distribution of the spectrum power of spatial covariance matrices.

In this section, we visually examine the extent of these changes by plotting a two-dimensional projection of the transformed spatial covariance matrices using the LGT method. This allows us to observe the shifts in their spectrum distributions.

In Figure~\ref{fig:Dist_LGT} (a), the spatial covariance matrices of the $200^{th}$ trial (represented by the bright red point) are located at the center of the cluster formed by the spatial covariance matrices of the first 199 trials (depicted in green). However, the segmented spatial covariance matrices of the $200^{th}$ trial (shown in dark red) are relatively uniformly distributed within the space occupied by the segmented spatial covariance matrices of the first 199 trials (illustrated in blue).

In contrast, Figure~\ref{fig:Dist_LGT} (b) demonstrates significant changes in the distributions of the spatial covariance matrices' spectrum. The LGT method averages each point with its neighboring points based on the graph topology. As a result, the centers of each group of points are closer together compared to the centers depicted in Figure~\ref{fig:Dist_LGT} (a).

\subsection{Variant Parameter Configurations of Time-frequency Graphs}\label{appendix:LGT}
We will investigate the spectrum distribution and resulting time-frequency graphs generated by different parameter configurations in the LGT method.

Figure~\ref{fig:spectrum} illustrates that the spectrum distribution, transformed by the LGT method, exhibits a localized forward diffusion from left to right along the forward time direction. This diffusion occurs independently in the $\theta, \mu, \beta$, and $\gamma$ components due to the separation of these frequency bands in the generation Algorithm~\ref{alg_lgt}.

Additionally, Figure~\ref{fig:graph} presents the corresponding time-frequency graphs. Nodes belonging to different frequency bands are not connected. When selecting a larger scale for the forward time flow, the number of edges in the time-frequency graph will increase according to the construction rule. For instance, the time-frequency graph in Figure~\ref{fig:graph} (b) exhibits more connected edges compared to the one in Figure~\ref{fig:graph} (a).

\section{Discussions}~\label{sec:discussion}

\subsection{Preset Edge Weights}
The edge weights of the graph are determined by calculating the average Riemannian distances between EEG segments across all samples in the training set. In a 10-fold CV scenario, adjacency matrix $A$ is initialized ten times, with slight differences between each fold. As a result, the adjacency matrix $A$ is specific to each individual, and consequently, the architecture of Graph-CSPNet that it generates is also individual-specific.

\subsection{Pseudocodes of Adjacency Matrix Generation}\label{sec:generation_A}
This section provides the pseudocode to generate an adjacency matrix $A$ of the time-frequency graph. A lattice vector is employed to acquire the average EEG spectral covariance matrices within an assigned time and frequency interval in the training set.

Taking the example of the BNCI2014001 dataset, with a 1000 ms signal and non-overlapping segmentations of 250 ms and 125 ms, we obtain 4 and 8 windows, respectively, as shown in Table~\ref{tab:non-uniform}. As a result, there are 48 EEG segments (= 4 time windows $\times$ 6 frequency bands + 8 time windows $\times$ 3 frequency bands), representing the nodes in the time-frequency graph. The lattice vector is a tensor with Dimension $= (48, 22, 22)$, where 22 is the number of electrodes. The 48 averaged spatial covariance matrices correspond to the second and third dimensions of the tensors $(i, 22, 22)$, where $1\leq i \leq 48$.

The Riemannian distances in Algorithm~\ref{alg_lgt} are calculated using the lattice vector. Specifically, the adjacency between two nodes is determined by a time-evolution mechanism. As shown in Table~\ref{tab:non-uniform}, the frequency bands are grouped into the $\theta, \mu, \beta$, and $\gamma$ components, with 4, 4, 16, and 24 nodes, respectively.

The generation algorithm takes into account the time and frequency directions in the $\theta, \mu, \beta$, and $\gamma$ components. For example, given Time Direction $(1,1,1,2)$ and Frequency Direction $(1,1,4,3)$, spatial covariance matrices evolve by one step, one step, one step, and two steps on the $\theta, \mu, \beta$, and $\gamma$ components, respectively, along the forward time direction, and one step, one step, four steps, and three steps along the frequency direction.

\begin{algorithm}[!h]
 \KwInput{Number of Nodes $(N_\theta, N_\mu, N_\beta, N_\gamma)$;\\
\hspace*{3.4em} Length of Time Horizon $(w_{\theta},w_{\mu},w_{\beta}, w_{\gamma})$; \\
\hspace*{3.4em} Time Direction $(x_{\theta},x_{\mu},x_{\beta},x_{\gamma})$; \\
\hspace*{3.4em} Frequency Direction $(y_{\theta},y_{\mu},y_{\beta},y_{\gamma})$.
 }
 \KwOutput{Adjacency Maxtrix $A[i^*, j^*]$.}

 \vspace{\baselineskip}
 Initialization of the lattice vector and start node s $\gets$ 1\;
 \vspace{\baselineskip}
 \For{$(N, w, x, y) \in [\theta, \mu, \beta, \gamma]$}{
 \vspace{\baselineskip}
 	\For{$i \gets \{1, ..., N\}$}{
		\tcc{Initialize the first coordinate for $A$.}
		$i^* \gets s + i$\;
		 \vspace{\baselineskip}
		 \tcc{Compute the largest forward time step.}
		$T \gets \min\{w - i \% w-1, x\}$\; 
		 \vspace{\baselineskip}
		\tcc{The time direction.}
		\If{$j \in \{i+1, ..., i+T+1\}$}{
			$j^* \gets s + j$\;
   			$A[i^*, j^*] \gets \exp({-d_{g^{AIRM}}^2(S_{i^*}, S_{j^*})/t})$\;
   			}
		 \vspace{\baselineskip}
		\tcc{The frequency direction.}
  		\If{$j \in \{i + wy, ..., i + wy + T+1\}$}{
			$j^* \gets s + j$\;
   			$A[i^*, j^*] \gets \exp({-d_{g^{AIRM}}^2(S_{i^*}, S_{j^*})/t})$\;
   			}
	}
\vspace{\baselineskip}	
$A[j^*, i^*] \gets A[i^*, j^*]$\;
s $\gets$ s + N.
}
\caption{Adjacency Matrix Generation~\label{alg_lgt}}
\end{algorithm}

In pseudocode, "$(N, w, x, y) \in [\theta, \mu, \beta, \gamma]$" indicates the selection of one combination from the four frequency bands ($\theta, \mu, \beta, \gamma$). For example, if $\theta$ is chosen, the combination can be represented as $(N_{\theta}, w_{\theta}, x_{\theta}, y_{\theta})$ with the $\theta$ component. To facilitate readability, we have omitted the subscripts. The "Number of Nodes $(N_\theta, N_\mu, N_\beta, N_\gamma)$" refers to the number of graph nodes belonging to a specific frequency component. The "Length of Time Horizon $(w_{\theta},w_{\mu},w_{\beta}, w_{\gamma})$" corresponds to the number of time windows in the respective frequency component, with each frequency component having different time resolutions, resulting in varying window counts. The inclusion of the maximum forward steps, denoted as $T$, is necessary because sometimes the distance to the rightmost position (the maximum time step) is less than one forward time step. The expression "$i \% w$" represents the remainder when $i$ is divided by $w$. When assigning values to the time and frequency directions, the usual convention is to first fix the frequency and move along the positive time direction. Then, within the same frequency band, higher frequencies are considered. Please refer to our GitHub repository mentioned in the abstract for specific code examples.

\subsection{Spectral Clustering and Laplacian}
Graph-CSPNet employs an SPD matrix-valued graph convolutional network to capture time-frequency information simultaneously.
Within the graph convolutional network, a technique known as spectral clustering is utilized. Spectral clustering is a multivariate statistical clustering technique that leverages the spectrum of the data's similarity matrix~\cite{von2007tutorial}. As outlined in Section~\ref{sec:graph}, the adjacency matrix $A$ of the time-frequency graph is constructed inspired by classical $\epsilon$-neighborhood neighbor methods from spectral clustering and with considerations specific to the application.
Spectral clustering enables the computation of the average spectral power across the time-frequency domain, effectively highlighting regions of interest in EEG signals for classification.

In image processing, the Laplacian operator is used to calculate the second-order derivative of an image and is effective in highlighting areas of rapid intensity changes or edges in the image. Similarly, in the context of spectral clustering in the time-frequency graph, we have a similar interpretation. By employing an overlapping segmentation approach instead of a non-overlapping one, we are able to capture more detailed local information. As the number of segments increases, spectral clustering places greater emphasis on regions of significant intensity changes on the SPD manifold, which is constructed based on statistical measures in the time-frequency domain.

Specifically, given an infinity segmentation plan that spatial covariance matrices $\{S_i\}_{i=1}^N \in \mathcal{S}_{++}^{n_C}$ of the time-frequency graph around test trial $\bar{S}$ are uniformly Gaussian distributed on SPD manifolds given in~\cite{said2017riemannian}, the discrete graph Laplacian $L_M = I- A$ of the time-frequency graph on networks-based function $f$ will converge to the continuous Laplacian $\Delta_M (f)$ with the bias term, established by many studies~\cite{hein2005graphs,singer2006graph,belkin2008towards}, as follows, $\sum_{j=1}^N L_M f(S_i)/\epsilon = \frac{1}{2} \Delta_M f(\bar{S}) + \mathcal{O}(\epsilon^{1/2})$.

\subsection{Geometric Deep Learning}

Geometric deep learning constitutes a subfield of machine learning that focuses on developing algorithms and models which are able to process non-Euclidean structured data, in particular, data represented as graphs and manifolds~\cite{bronstein2017geometric}. 

Graph Neural Networks (GNNs) are widely recognized as a primary category in geometric deep learning and have been extensively applied to solve a broad range of problems in various domains, such as traffic networks, graph-based recommender systems, molecule design, etc~\cite{wu2020comprehensive}. While the network architectures for these applications are typically designed in Euclidean spaces, there has been growing interest in GNNs in non-Euclidean spaces, or alternatively referred to as manifold-valued GNNs~\cite{chami2019hyperbolic,zhu2020graph,dai2021hyperbolic}. 

Our scenario precisely fits into the category of GNN problems in a non-Euclidean space, as each node corresponds to an EEG spatial covariance matrix, naturally represented as points on SPD manifolds. The key aspect of our problem lies in the utilization of the BiMap layer in neural networks, which incorporates the BiMap transformation to preserve the discriminative power between task classes, as characterized by the Riemannian distance (see Equation~\ref{formula_riemannian-based},~\ref{formula_AIRM}, and Table~\ref{tab:treatments}). Additionally, the graph structure provides a neural network-based framework for propagating information from different time-frequency EEG segments concurrently, guided by principles from neurophysics.

\begin{table*}[!t]
\caption{Parameters in a two-layer Graph-CSPNet with input tensor shape $(N, o_1, o_1)$.
}
\centering 
\begin{tabular}{l l l r r }
\toprule
Layer   & Type of Parameters & Shape of Outputs & Number of Parameters\\
\midrule
Graph BiMap Layer (1st) &Stiefel manifolds &$(N, o_1, o_2)$   &  $N o_1 o_2$ \\
Graph BiMap Layer (2nd)&Stiefel manifolds &$(N, o_2, o_3)$   &  $N o_2 o_3$ \\
Riemannian Batch Normalization &SPD manifolds &$(N, o_3, o_3)$   &  $o_3^2$ \\
LOG  		     &/ &$(N, o_3, o_3)$   &   / \\
Linear  	             &Euclidean &c 		  &  $cNo_3^2$ \\
\midrule
Total Number of Parameters  &   /   &  / &  $N (o_1 + o_3) o_2 + (cN+1) o_3^2$ \\
\bottomrule
\end{tabular}
\label{tab:para_architecture}
\end{table*}

\subsection{The ReEig Layer}\label{appen:projection}
Let $S$ be a real symmetric $d \times d$ matrix with eigenvalues $\lambda_1 \geq \lambda_2 \geq \cdots \geq \lambda_d$ and corresponding orthonormal eigenvectors $\{u_i\}_{i=1}^d$. Then, the spectral decomposition of $S$ is given as $S = \sum_{i=1}^d \lambda_i u_i u_i^{\top}$.
To handle the nonnegative eigenvalues in symmetric matrices, we introduce the following lemma, an essential technique in convex optimization~\cite{boyd2004convex}.
\begin{lemma}\label{l2} 
Projection $S^\dag:=\sum_{i=1}^d\,\max{\{\lambda_i, 0\}} u_i u_i^{\top}$ on the positive semidefinite cone is the extremum of the minimization problem $||S-S^\dag||_2^2$ subject to $S \succeq 0$. 
\end{lemma}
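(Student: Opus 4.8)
The plan is to read the statement as the computation of the Euclidean (Frobenius) projection of the fixed symmetric matrix $S$ onto the closed convex positive semidefinite cone, i.e.\ to show that $S^{\dag} = \arg\min_{X \succeq 0}\|S - X\|_2^2$, where $\|\cdot\|_2$ denotes the squared Frobenius norm. Since this objective is strictly convex and the PSD cone is closed and convex, a unique minimizer exists, so it suffices to exhibit one admissible matrix attaining the minimum and identify it as $S^{\dag}$. First I would diagonalize, writing $S = U\Lambda U^{\top}$ with $U = [u_1,\ldots,u_d]$ orthogonal and $\Lambda = \diag(\lambda_1,\ldots,\lambda_d)$, and then exploit the orthogonal invariance of the Frobenius norm to substitute $Y := U^{\top} X U$, which satisfies $X \succeq 0 \iff Y \succeq 0$. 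This reduces the problem to minimizing $\|\Lambda - Y\|_2^2$ over $Y \succeq 0$.

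The key step is to separate diagonal from off-diagonal contributions: because $\Lambda$ is diagonal, $\|\Lambda - Y\|_2^2 = \sum_{i}(\lambda_i - Y_{ii})^2 + \sum_{i \neq j} Y_{ij}^2$. I would then lower-bound this using two elementary facts about PSD matrices, namely that every $Y \succeq 0$ has nonnegative diagonal entries $Y_{ii} \geq 0$ and that the off-diagonal sum is nonnegative, obtaining $\|\Lambda - Y\|_2^2 \geq \sum_i (\lambda_i - Y_{ii})^2 \geq \sum_i \min_{t \geq 0}(\lambda_i - t)^2 = \sum_i (\lambda_i - \max\{\lambda_i,0\})^2$, where the final equality records that the scalar minimizer over $t \geq 0$ is exactly $\max\{\lambda_i,0\}$. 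To finish I would verify that this lower bound is attained by the diagonal choice $Y^{\star} = \diag(\max\{\lambda_1,0\},\ldots,\max\{\lambda_d,0\})$, which is PSD and annihilates every off-diagonal term; transforming back gives $X^{\star} = U Y^{\star} U^{\top} = \sum_{i} \max\{\lambda_i,0\}\,u_i u_i^{\top} = S^{\dag}$.

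The main obstacle will be justifying that the optimal $Y$ is diagonal despite the coupling constraint $Y \succeq 0$, since one cannot a priori minimize the diagonal and off-diagonal terms independently. The resolution is precisely the chain of inequalities above: the off-diagonal penalty can only increase the objective, while a diagonal PSD matrix is free to place any nonnegative value on each diagonal entry, so the coordinatewise scalar minima are simultaneously achievable inside the cone, and the bound is tight. As an alternative route I could invoke the Moreau decomposition for the self-dual PSD cone, splitting $S = S_{+} + S_{-}$ into its positive and negative spectral parts (with $S_{+} = S^{\dag} \succeq 0$ and $S_{-} \preceq 0$) and noting that $\langle S_{+}, S_{-}\rangle_{\mathcal{F}} = 0$ because eigenvectors for distinct eigenvalues are orthogonal; the standard characterization of projection onto a closed convex cone then identifies $S^{\dag}$ as the projection directly. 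I would favor the direct Frobenius argument, however, as it is self-contained and avoids importing the cone-projection theorem.
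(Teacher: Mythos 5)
Your proof is correct and complete, but it is a genuinely different argument from the paper's — in fact it proves the Frobenius-norm version of the lemma, whereas the paper's own proof reads $\|\cdot\|_2$ as the spectral (operator) norm: assuming $\lambda_d<0$, the paper computes $\|S-S^\dag\|_2=-\lambda_d$ and then shows no feasible point does better via a single Rayleigh-quotient bound, $\|S-X\|_2\geq -u_d^\top (S-X)u_d = -\lambda_d + u_d^\top X u_d \geq -\lambda_d$ for every $X\succeq 0$, using only the fact $u_d^\top X u_d\geq 0$ (the printed chain in the paper has some sign sloppiness, but this is its substance). Your route — orthogonal invariance, the congruence $Y=U^\top X U$, splitting $\|\Lambda-Y\|_{\mathcal{F}}^2$ into diagonal and off-diagonal parts, and coordinatewise minimization over $Y_{ii}\geq 0$ — is the standard argument for Euclidean projection onto the PSD cone, and you correctly dispose of the one real subtlety (why the optimum may be taken diagonal despite the coupled PSD constraint): the off-diagonal terms only increase the objective, and the scalar minima $\max\{\lambda_i,0\}$ are simultaneously feasible. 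What each approach buys: yours yields strictly more, since the squared Frobenius objective is strictly convex on the closed convex cone, so the minimizer is \emph{unique} and $S^\dag$ is \emph{the} projection — which is what the word ``Projection'' in the statement (and the citation to convex optimization) really asks for; your Moreau-decomposition alternative via self-duality of the cone is also valid and would be the fastest textbook route. The paper's spectral-norm argument is shorter (one test vector suffices), but it delivers less: in that norm the minimizer is not unique — for instance $S-\lambda_d I\succeq 0$ also attains the optimal distance $-\lambda_d$ — so the paper establishes only that $S^\dag$ is \emph{an} extremum, not the extremum.
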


\begin{proof} 
Without loss of generality, let $\lambda_d < 0$ and define the projection $S^\dag:=\sum_{i=1}^d\max{\{\lambda_i, 0\}} u_i u_i^{\top}$. 
It follows that $||S-S^\dag||_2 = -\lambda_d$. 
To prove this lemma, it suffices to demonstrate that $||S-S^\dag||_2 \geq -\lambda_d$. 
Since we have $||S||_2 = \max{\{\lambda_1, -\lambda_d\}} = \max{\{ u_1^{\top} S u_1, -u_d^{\top} S u_d \}} = \max{ \{\sup_{||u||_2=1} u^{\top} S u, \inf_{||u||_2=1} -u^{\top} S u\}}$, then it yields that $||S-S^\dag||_2 \geq \sup_{||u||_2=1} u^{\top} (S-S^\dag) u \geq u_d^{\top} (S-S^\dag) u_d \geq -u_d^{\top} S u_d \geq -\lambda_d$.
\end{proof}

By Lemma~\ref{l2}, we obtain a means to eliminate the insignificant eigenvalues of SPD matrices. 
In practical applications, we establish a lower bound, $\beta$, with a value greater than 0 to ensure that all eigenvalues are not less than $\beta$, i.e., $S^\dag:=\sum_{i=1}^d \, \max{\{\lambda_i, \beta\}}\, u_i u_i^{\top}$. In the implementation, we consistently set the small eigenvalue as $\beta = 1e-6$. Roughly speaking, the choice of different magnitudes does not have a significant impact on the performance in the presented scenarios.

\subsection{Architecture of Graph-CSPNet}\label{appendix:architecture}
Table~\ref{tab:para_architecture} summarizes the layers and learnable network parameters in Graph-CSPNet. The total number of learnable network parameters is $N o_2 (o_1 + o_3) + (cN+1) o_3^2$. For instance, in the case of BNCI2014001, the network configuration is given by $N=48$, $c=4$, $o_1 = 22, o_2 = 36$, and $o_3 = 22$, where $o_1$ and $o_2$ are the input and output dimensions for the first graph BiMap layer respectively, $o_2$ and $o_3$ are the input and output dimensions for the second graph BiMap layer. $N$ is the number of the time-frequency graph nodes, and c is the number of classes. The total number of learnable network parameters is 169,444 parameters. Compared to Tensor-CSPNet, this amount is six times the parameters in 1-CSPNet$^{(9,1,1)}$ (27,104 parameters) and almost two-thirds of the parameters in 5-CSPNet$^{(9,3,1)}$ (232,360 parameters).

\subsection{Optimization on Smooth Manifolds}\label{sec:optimization}
Due to the parameterization of neural networks in both geometric classifiers involving manifolds, specialized optimizers that operate on manifolds are employed to enhance performance.
Specifically, the BiMap and graph BiMap layers have parameters defined on Stiefel manifolds, while the Riemannian Batch Normalization utilizes parameters defined on SPD manifolds.
To address the issue of local minima, first-order stochastic optimization methods, such as Adam and AdaGrad, which are commonly used in the Euclidean domain, need to be adapted to the corresponding Riemannian manifolds to accommodate the manifold constraints. This is crucial because the deep learning approach exhibits high non-convexity throughout its domain~\cite{becigneul2018riemannian}.

To meet this requirement, specialized first-order Riemannian adaptive optimization methods have been developed to optimize networks with manifold-valued data~\cite{geoopt2020kochurov}. 
The generalization of adaptive optimization methods involves two essential operators: retraction and parallel transport, which are established within the framework of Riemannian optimization~\cite{absil2009optimization}. 
Riemannian optimization is a field of mathematical optimization that deals with optimization problems subject to manifold constraints, i.e., $\min_{x\in \mathcal{M}} f(x)$, where $\mathcal{M}$ is a Riemannian manifold and $f:\mathcal{M}\rightarrow \mathbb{R}$ is a smooth function.
Retraction is a first-order approximation of the exponential map in the manifold space, while parallel transport refers to transporting vectors along smooth curves in manifolds equipped with an affine connection. These two operators, within the manifold domain, enable neural networks to perform gradient descent with improved precision in each update iteration.
%

\subsection{Time-Domain Statistics and Frequency-Domain Statistics}\label{TD_statistics}
In this section, we focus on exploring the similarities and differences between the statistical characteristics of EEG signals in the time and frequency domains.
Typically, the analysis of EEG signals involves extracting statistical features through spatial, temporal, and spectral analyses, either individually or in combination.
A survey paper~\cite{krusienski2012bci} classifies these features into three categories: spatial, frequency (time-frequency), and similarity.
These categories encompass various statistical measures obtained from both the time and frequency domains.

\begin{itemize}
\item Spatial Features: These features primarily involve time-domain statistics and can be obtained through supervised methods such as principal component analysis, independent component analysis, and common spatial patterns or through unsupervised methods like the common-average reference and surface Laplacian spatial filters.
\item Frequency Features: These features focus on frequency-domain statistics and include measures such as band power, fast Fourier transform, and wavelet analysis.
\item Similarity Features: This category encompasses features that capture the similarity between EEG signals, with some of these features being frequency-domain statistics. Examples of similarity features include phase-locking value and magnitude-squared 
\end{itemize}
%
This section aims to demonstrate the equivalence between time-domain statistics, represented by spatial covariance matrices, and certain frequency-domain statistics, such as band power and magnitude-squared coherence. We argue that classifiers utilizing spatial covariance matrices in MI-EEG classification can capture information comparable to that obtained from other sources, including phase locking value and magnitude-squared coherence.

This relationship has been previously explored in the literature, as discussed in~\cite{krusienski2012value}. It has been observed that various frequency-domain features often reflect the same underlying neurophysiological phenomenon during a task. However, this fundamental fact has yet to be explicitly emphasized in previous works, and its recognition is crucial for the successful implementation of geometric classifiers.

Suppose a wide-sense stationary real-valued random process $X_t$ with zero mean $\mathbb{E}(X_t) = 0$.
The auto-correlation function $R_X (\tau)$ of $X_t$ is defined as $R_X (\tau) := \mathbb{E} (X_t^\top X_{t+\tau})$, and the expected (band) power $P_X$ of $X_t$ is given by $P_X := \mathbb{E} |X_t|^2 = R_X (0) = \int_{\mathbb{R}} S_X (\omega) \, d\omega$, where $S_X (\omega)$ is the power spectral density. 
Hence, the variance of zero-centered $X_t$ is $\sigma_X^2 := \mathbb{E} |X_t - \mathbb{E}(X_t)|^2 = P_X$.
This argument applies to each channel's time series $X_c$ in multichannel EEG data. Thus, the diagonal entries of the spatial covariance matrices correspond to the variance $\sigma_{X_c}^2$, which is equivalent to the geometric classifiers expected (band) power $P_{X_c}$ of that channel $c$.
Using a similar line of reasoning based on the autocovariance function, it can be demonstrated that the off-diagonal entries of the spatial covariance matrices are equivalent to the magnitude-squared coherence without normalization.
In practice, implementing a neural network-based approach using non-zero-centered time series yields satisfactory classification performance, and there is no need to preprocess the time series by zero-centering them.

\section{Conclusions}
The primary objective of this study is to enhance the geometric classifiers in MI-EEG classification from a time-frequency analysis perspective, aiming to capture the time-frequency information of signals more effectively. Furthermore, this paper formally introduces the term \emph{geometric classifiers} as the name for this category for the first time. To achieve this goal, we present a novel network architecture called Graph-CSPNet based on an SPD matrix-valued time-frequency graph. In Graph-CSPNet, we extend graph convolutional networks to SPD manifolds equipped with $g^{AIRM}$ and utilize them to locate EEG rhythmic components in the time-frequency domain precisely. In the graph BiMap layer, we utilize preset graph weights calculated using Riemannian distances to effectively preserve discriminative information of different time-frequency spatial covariance matrices, allowing Graph-CSPNet to inherit the characteristics and advantages of several well-known MI-EEG classifiers.

In sum, the study's contribution is twofold, providing both theoretical and practical advances as follows: Theoretically, Graph-CSPNet signifies an improvement in the feasibility of existing models, approached from the fundamental principle of time-frequency analysis in signal processing.
Moreover, the potential value of this study lies in proposing a category of methods called \emph{Geometric Classifiers} that have a foundation in differential geometry within the era of deep learning. These geometric classifiers are more capable of capturing the characteristics of signal covariance matrices and offer greater possibilities associated with advanced mathematical and physical tools. By incorporating concepts from differential geometry, these methods provide a promising avenue for advancing the understanding and analysis of complex neurophysiological data, such as EEG spatial covariance matrix and fMRI functional connectivity matrix, within the framework of geometric deep learning; Practically, Graph-CSPNet provides enhanced flexibility in handling variable time-frequency resolution for signal segmentation and capturing localized fluctuations. It performs well on five widely-used MI-EEG datasets, utilizing 10-fold cross-validation and subject-specific holdout scenarios. In nine out of eleven scenarios, Graph-CSPNet achieves near-optimal results, showcasing improved classifier performance and offering valuable practical advancements.

\section*{Acknowledgments}
This study is financially supported by the RIE2020 Industry Alignment Fund–Industry Collaboration Projects (IAF-ICP) Funding Initiative, as well as cash and in-kind contributions from industry partner(s). The study is also supported by the RIE2020 AME Programmatic Fund under Singapore grant No. A20G8b0102.

{\footnotesize
\bibliographystyle{IEEEtran}
\bibliography{refs}
}

\end{document}